\definecolor{light-light-gray}{gray}{0.92} 
\theoremstyle{plain}
\newtheorem{prop}{Proposition}
\newcommand{\indep}{\perp\!\!\!\!\perp} 
\newcommand{\notindep}{\not\!\perp\!\!\!\perp } 
\icmltitlerunning{CellFlux: Simulating Cellular Morphology Changes via Flow Matching}
\begin{document}

\twocolumn[
\icmltitle{\emph{CellFlux}: Simulating Cellular Morphology Changes via Flow Matching}
\icmlsetsymbol{equal}{*}
\begin{icmlauthorlist}
\icmlauthor{Yuhui Zhang}{a,equal}
\icmlauthor{Yuchang Su}{b,equal}
\icmlauthor{Chenyu Wang}{c}
\icmlauthor{Tianhong Li}{c}
\icmlauthor{Zoe Wefers}{a}
\icmlauthor{Jeffrey Nirschl}{a}
\icmlauthor{James Burgess}{a}
\icmlauthor{Daisy Ding}{a}
\icmlauthor{Alejandro Lozano}{a}
\icmlauthor{Emma Lundberg}{a}
\icmlauthor{Serena Yeung-Levy}{a}
\end{icmlauthorlist}
\icmlaffiliation{a}{Stanford University}
\icmlaffiliation{b}{Tsinghua University}
\icmlaffiliation{c}{MIT}
\icmlcorrespondingauthor{Yuhui Zhang}{yuhuiz@stanford.edu}
\icmlcorrespondingauthor{Serena Yeung-Levy}{syyeung@stanford.edu}
\icmlkeywords{flow matching, cell image, drug discovery, generative models}
\vskip 0.3in
]
\printAffiliationsAndNotice{\icmlEqualContribution}

\begin{abstract}
Building a virtual cell capable of accurately simulating cellular behaviors in silico has long been a dream in computational biology. We introduce \emph{CellFlux}, an image-generative model that simulates cellular morphology changes induced by chemical and genetic perturbations using flow matching. Unlike prior methods, \emph{CellFlux} models distribution-wise transformations from unperturbed to perturbed cell states, effectively distinguishing actual perturbation effects from experimental artifacts such as batch effects—a major challenge in biological data. Evaluated on chemical (BBBC021), genetic (RxRx1), and combined perturbation (JUMP) datasets, \emph{CellFlux} generates biologically meaningful cell images that faithfully capture perturbation-specific morphological changes, achieving a 35\% improvement in FID scores and a 12\% increase in mode-of-action prediction accuracy over existing methods. Additionally, \emph{CellFlux} enables continuous interpolation between cellular states, providing a potential tool for studying perturbation dynamics. These capabilities mark a significant step toward realizing virtual cell modeling for biomedical research. Project page: \url{https://yuhui-zh15.github.io/CellFlux/}.
\end{abstract}

\section{Introduction}
\label{sec:intro}

\begin{figure*}[!tb]
    \centering
    \includegraphics[width=\linewidth]{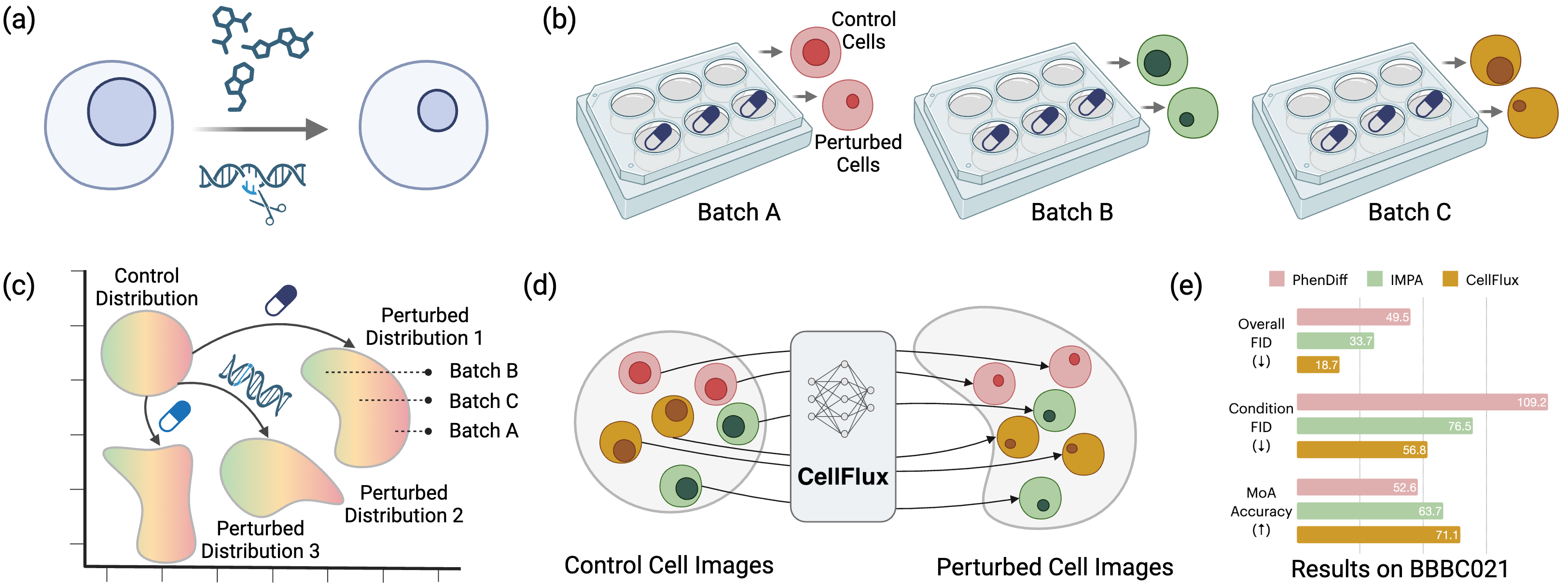}
    \vspace{-2em}
    \caption{
\textbf{Overview of \emph{CellFlux}.}
\textit{(a) Objective.} \emph{CellFlux} aims to predict changes in cell morphology induced by chemical or gene perturbations \textit{in silico}. In this example, the perturbation effect reduces the nuclear size.
\textit{(b) Data.} The dataset includes images from high-content screening experiments, where chemical or genetic perturbations are applied to target wells, alongside control wells without perturbations. 
Control wells provide prior information to contrast with target images, enabling the identification of true perturbation effects (e.g., reduced nucleus size) while calibrating non-perturbation artifacts such as batch effects—systematic biases unrelated to the perturbation (e.g., variations in color intensity).
\textit{(c) Problem formulation.} We formulate the task as a distribution-to-distribution problem (many-to-many mapping), where the source distribution consists of control images, and the target distribution contains perturbed images within the same batch.
\textit{(d) Flow matching.} \emph{CellFlux} employs flow matching, a state-of-the-art generative approach for distribution-to-distribution problems. It learns a neural network to approximate a velocity field, continuously transforming the source distribution into the target by solving an ordinary differential equation (ODE).
\textit{(e) Results.} \emph{CellFlux} significantly outperforms baselines in image generation quality, achieving lower Fréchet Inception Distance (FID) and higher classification accuracy for mode-of-action (MoA) predictions.
    }
    \vspace{-1em}
    \label{fig:overview}
\end{figure*}

Building a virtual cell that simulates cellular behaviors in silico has been a longstanding dream in computational biology~\cite{slepchenko2003quantitative, johnson2023building, bunne2024build}. Such a system would revolutionize drug discovery by rapidly predicting how cells respond to new compounds or genetic modifications, significantly reducing the cost and time of biomedical research by prioritizing the experiments most likely to succeed based on the virtual cell simulation~\cite{carpenter2007image}. Moreover, this could unlock personalized therapeutic development by building digital twins of cells from patients to simulate patient-specific responses~\cite{katsoulakis2024digital}.

Two recent advances have made creating a generative virtual cell model possible. On the computational side, generative models now excel at modeling and sampling from complex data distributions, demonstrating remarkable success in synthesizing texts, images, videos, and biological sequences~\cite{openai2024gpt4technicalreport, esser2024scaling,pmlr-v235-kondratyuk24a,hayes2025simulating}. Concurrently, on the biotechnology side, automated high-content screening has generated massive imaging datasets --- reaching terabytes or petabytes --- that capture how cells respond to hundreds of thousands of chemical compounds and genetic modifications~\cite{chandrasekaran2023jump, fay2023rxrx3}.

In this work, we introduce \emph{CellFlux}, an image-generative model that simulates how cellular morphology changes in response to chemical or genetic perturbations (Figure~\ref{fig:overview}a). \emph{CellFlux}’s key innovation is formulating cellular morphology prediction as a distribution-to-distribution learning problem, and leveraging flow matching~\cite{lipman2022flow}, a state-of-the-art generative modeling technique designed for distribution-wise transformation, to solve this problem.

Specifically, cell morphology data are collected through high-content microscopy screening, where images of control and perturbed cells are captured from experimental wells across different batches (Figure~\ref{fig:overview}b). Control wells, which receive no drug treatment or genetic modifications, play a crucial role in providing prior information and serving as a reference to distinguish true perturbation effects from other sources of variation. They help calibrate non-perturbation factors, such as batch effects—systematic biases unrelated to perturbations, including variations in color or intensity, akin to distribution shifts in machine learning. Properly incorporating control wells is essential for capturing actual perturbation effects rather than artifacts, yet many existing methods overlook this aspect~\cite{yang2021mol2image,navidi2024morphodiff,cook2024diffusion}. To address this, we frame cellular morphology prediction as a distribution-to-distribution mapping problem (Figure~\ref{fig:overview}c), where the source distribution consists of control cell images, and the target distribution comprises perturbed cell images from the same batch.

To address this distribution-to-distribution problem, \emph{CellFlux} employs flow matching, a state-of-the-art generative modeling approach designed for distribution-wise transformations (Figure~\ref{fig:overview}d). The framework continuously transforms the source distribution into the target using an ordinary differential equation (ODE) by learning a neural network to approximate a velocity field. This direct and native distribution transformation enabled by flow matching is intuitively more effective than previous methods, which rely on adding extra components to GANs, incorporating the source as a condition, or mapping between distributions and noise using diffusion models~\cite{palma2023predicting,hung2024lumic,bourou2024phendiff}.

We demonstrate the effectiveness of \emph{CellFlux} on three datasets: BBBC021 (chemical perturbations)~\cite{caie2010high}, RxRx1 (genetic modifications via CRISPR or ORF)~\cite{sypetkowski2023rxrx1}, and JUMP (combined chemical and genetic perturbations)~\cite{chandrasekaran2023jump}. \emph{CellFlux} generates high-fidelity images of cellular changes in response to perturbations across all datasets, improving FID scores by 35\% over previous approaches. The generated images capture meaningful biological patterns, demonstrated by a 12\% improvement in predicting mode-of-action compared to existing methods (Figure~\ref{fig:overview}e). Importantly, \emph{CellFlux} maintains consistent performance across diverse experimental conditions and generalizes to held-out perturbations never seen during training, showing its broad applicability. 

Moreover, \emph{CellFlux} introduces two key capabilities with significant potential for biological research (Figure~\ref{fig:interpolation}). First, it effectively corrects batch effects by conditioning on control cells from different batches. By comparing control images with generated images, it can disentangle true perturbation-induced morphological changes from experimental batch artifacts. Second, \emph{CellFlux} enables bidirectional interpolation between cellular states due to the continuous and reversible nature of the velocity field in flow matching. This interpolation provides a means to explore intermediate cellular morphologies and potentially gain deeper insights into dynamic perturbation responses.

In summary, by formulating cellular morphology prediction as a distribution-to-distribution problem and using flow matching as a solution, \emph{CellFlux} enables accurate prediction of perturbation responses (Figure~\ref{fig:overview}). \emph{CellFlux} not only achieves state-of-the-art performance but unlocks new capabilities such as handling batch effects or visualizing cellular state transitions, significantly advancing the field towards a virtual cell for drug discovery and personalized therapy.

\section{Problem Formulation}
\label{sec:background}

In this section, we introduce the objective, data, and mathematical formulation of cellular morphology prediction.

\subsection{Objective}

Let $\mathcal{X}$ denote the cell image space and $\mathcal{C}$ the perturbation space. Let $p_0$ represent the original cell distribution and $p_1$ represent the distribution of cells after a perturbation $c \in \mathcal{C}$. Cellular morphology prediction aims to learn a generative model $p_\theta: \mathcal{X} \times \mathcal{C} \to \mathcal{P}(\mathcal{X})$, which, given an unperturbed cell image $x_0 \sim p_0$ and a perturbation $c \in \mathcal{C}$, predicts the resulting conditional distribution $p(x_1 | x_0, c)$. From this distribution, new images can be sampled to simulate the effects of the perturbation, such that $x_1 \sim p_1$ (Figure~\ref{fig:overview}a).

The input space consists of multi-channel microscopy images, where $\mathcal{X} \subset \mathbb{R}^{H \times W \times C}$. Here, $H$ and $W$ represent the image height and width, while $C$ denotes the number of channels, each highlighting different cellular components through specific fluorescent markers (analogous to RGB channels in natural images, but capturing biological structures like mitochondria, nuclei, and cellular membranes).

The perturbation space $\mathcal{C}$ includes two types of biological interventions: chemical (drugs) and genetic (gene modifications). Chemical perturbations involve compounds that target specific cellular processes --- for example, affecting DNA replication or protein synthesis. 
Genetic perturbations can turn off gene expression (CRISPR) or upregulate gene expression (ORF).

This generative model enables in silico simulation of cellular responses, which traditionally require time-intensive and costly wet-lab experiments. Such computational modeling could revolutionize drug discovery by enabling rapid virtual drug screening and advance personalized medicine through digital cell twins for treatment optimization.

\subsection{Data}

Cell morphology data are collected through high-content microscopy screening (Figure~\ref{fig:overview}b) \cite{perlman2004multidimensional}. In this process, biological samples are prepared in multi-well plates containing hundreds of independent experimental units (wells). Selected wells receive interventions --- either chemical compounds or genetic modifications --- while control wells remain unperturbed. After a designated period, cells are fixed using chemical fixatives and stained with fluorescent dyes to highlight key structures like the nucleus, cytoskeleton, and mitochondria. An automated microscope then captures multiple images per well. This process is called cell painting. Modern automated high-content screening systems have enabled large-scale data collection, resulting in datasets of terabyte to petabyte images from thousands of perturbation conditions \cite{fay2023rxrx3, chandrasekaran2023jump}.

However, the cell painting process has limitations: cell painting requires cell fixation, which is destructive, making it impossible to observe the same cells dynamically during a perturbation. This creates a fundamental constraint: we cannot obtain paired samples $\{(x_0, x_1)\}$ showing the exact same cell without and with treatment. Instead, we must work with unpaired data $(\{x_0\}, \{x_1\})$, where $\{x_0\}$ represents control images and $\{x_1\}$ represents treated images, to learn the conditional distribution $p(x_1 | x_0, c)$.

One solution is to leverage the distribution transformation from control cells to perturbed cells within the same batch to learn conditional generation. Control cells serve as a crucial reference by providing prior information to separate true perturbation effects from confounding factors such as batch effects. Variations in experimental conditions across different runs (batches) introduce systematic biases unrelated to the perturbation itself. For instance, images from one batch may consistently differ in pixel intensities from those in another. Therefore, meaningful comparisons require analyzing treated and control samples from the same batch. As shown in Figure~\ref{fig:overview}b, this approach helps distinguish true biological responses, like changes in nuclear size, from batch-specific artifacts, like changes in color.

\begin{figure*}[!tb]
    \centering
    \includegraphics[width=\linewidth]{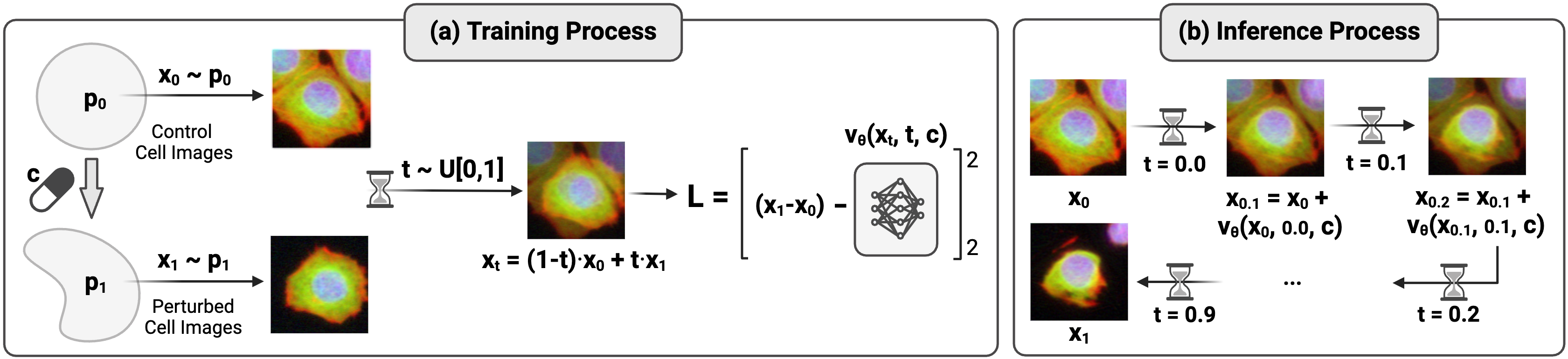}
    \vspace{-2em}
    \caption{
\textbf{\emph{CellFlux} algorithm.}
\textit{(a) Training.} The neural network \(v_\theta\) learns a velocity field by fitting trajectories between control cell images (\(x_0 \sim p_0\)) and perturbed cell images (\(x_1 \sim p_1\)). At each training step, intermediate states \(x_t\) are sampled along the linear interpolation between \(x_0\) and \(x_1\), with \(t \sim U[0, 1]\). The network minimizes the loss \(L\), which measures the difference between the predicted velocity \(v_\theta(x_t, t, c)\) and the true velocity \((x_1 - x_0)\).
\textit{(b) Inference.} The trained velocity field \(v_\theta\) guides the transformation of a control cell state \(x_0\) into a perturbed cell state \(x_1\). This is achieved by solving an ordinary differential equation iteratively, using numerical integration steps over time \(t\) (e.g., \(t = 0.0, 0.1, 0.2, \ldots, 0.9, 1.0\)). Each step updates the cell state using the learned velocity field.
    }
    \vspace{-1em}
    \label{fig:cellflow}
\end{figure*}

\subsection{Mathematical Formulation}

\begin{center}
\begin{tikzpicture}[->, >=stealth, node distance=1.6cm, font=\sffamily]
    \node[draw, circle, fill=gray!30] (b) {$B$};
    \node[draw, circle, fill=gray!30, right of=b] (c) {$C$};
    \node[draw, circle, fill=gray!30, below left=of b] (x0tilde) {$\tilde{X_0}$};
    \node[draw, circle, below of=b] (x0) {$X_0$};
    \node[draw, circle, fill=gray!30, below of=c] (x1) {$X_1$};
    \draw (b) -- (x0);
    \draw (c) -- (x1);
    \draw (x0) -- (x1);
    \draw (b) -- (x0tilde);
\end{tikzpicture}
\vspace{-10pt}
\end{center}

Let us formalize our learning problem in light of the experimental constraints described before. Our objective is to learn a conditional distribution $p(x_1|x_0,c)$ that models the cellular response to perturbation. However, due to the destructive nature of imaging, we cannot observe paired samples $\{(x_0,x_1)\}$. We propose a probabilistic graphical model to address this challenge.

In our graph, random variable $B$ denotes the experimental batch, $C$ denotes the perturbation condition, $X_0$ represents the unobservable basal cell state, $\tilde{X_0}$ represents control cells from the same batch, and $X_1$ denotes the perturbed cell state. From our experimental setup, we have access to the control distribution $p(\tilde{x_0}|b)$ from unperturbed cells and the perturbed distribution $p(x_1|c,b)$ from treated cells.

We propose to leverage the distributional transition from $p(\tilde{x_0}|b)$ to $p(x_1|c,b)$ to learn the individual-level trajectory $p(x_1|x_0,c)$, as shown in Figure~\ref{fig:overview}c. There are two key reasons. First, there exists a natural connection between $p(x_1|c,b)$ and $p(x_0|b)$ through the marginalization $p(x_1|c,b)= \int p(x_1|x_0,c) p(x_0|b) dx_0$. Second, while $p(x_0|b)$ is not directly tractable, we can approximate it using $p(\tilde{x_0}|b)$ since both the ground-truth $X_0$ distribution and control distribution $\tilde{X_0}$ follow the same batch-conditional distribution: $x_0 \sim p(\cdot|b)$ and $\tilde{x_0} \sim p(\cdot|b)$. 

Our approach of learning $p(x_1|\tilde{x_0},c)$ by conditioning on same-batch control images improves upon existing methods that ignore control cells and learn only $p(x_1|c)$. Intuitively, conditioning on $\tilde{x_0}$ allows the model to initiate the transition from a distribution more closely aligned with the underlying $x_0$, leading to a better approximation of the true distribution $p(x_1|x_0,c)$. We formalize this intuition in the following proposition, with proof provided in Appendix~\ref{sec:theory}:

\begin{prop}
    \label{prop:cond}
    Given random variables $B$, $C$, $X_0$, $\tilde{X_0}$, and $X_1$ following the graphical model above with joint distribution $p(b, c, x_0, \tilde{x_0}, x_1)$, the distribution $p(x_1|x_0,c)$ can be better approximated by the conditional distribution $p(x_1|\tilde{x_0},c)$ than $p(x_1|c)$ in expectation. Formally,
    \begin{equation}
    \begin{split}
    & \mathbb{E}_{p(x_0, \tilde{x_0}, c)}\left[ D_{\text{KL}}(p(x_1|x_0,c)||p(x_1|\tilde{x_0},c)) \right] \nonumber\\
    \leq  & \mathbb{E}_{p(x_0, c)}\left[ D_{\text{KL}}(p(x_1|x_0,c)||p(x_1|c)) \right]
    \end{split}
    \end{equation}
\end{prop}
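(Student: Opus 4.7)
The plan is to recast both expectations as conditional mutual informations and then reduce the inequality to a one-line consequence of the chain rule. The key structural fact is a conditional independence read off directly from the graph: since the only path between $X_1$ and $\tilde{X_0}$ travels through $B \to X_0$, which is blocked once we condition on $X_0$, and no active path touches $C$, we get $X_1 \indep \tilde{X_0} \mid X_0, C$, equivalently $p(x_1 \mid x_0, \tilde{x_0}, c) = p(x_1 \mid x_0, c)$.

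With that identity in hand I would rewrite each side. The right-hand side is by definition $I(X_0; X_1 \mid C)$. For the left-hand side, substituting $p(x_1 \mid x_0, \tilde{x_0}, c)$ for $p(x_1 \mid x_0, c)$ inside the KL converts it into $\mathbb{E}_{p(x_0, \tilde{x_0}, c)}\, D_{\text{KL}}(p(x_1 \mid x_0, \tilde{x_0}, c) \,\|\, p(x_1 \mid \tilde{x_0}, c)) = I(X_0; X_1 \mid \tilde{X_0}, C)$. The proposition thus reduces to the bound $I(X_0; X_1 \mid \tilde{X_0}, C) \leq I(X_0; X_1 \mid C)$. To prove this, I would expand the chain rule on $I(\tilde{X_0}; X_0, X_1 \mid C)$ in two directions: splitting off $X_0$ first and invoking the conditional independence (so $I(\tilde{X_0}; X_1 \mid X_0, C) = 0$) gives $I(\tilde{X_0}; X_0 \mid C)$, while splitting off $X_1$ first gives $I(\tilde{X_0}; X_1 \mid C) + I(\tilde{X_0}; X_0 \mid X_1, C)$. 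Equating these and using $I(\tilde{X_0}; X_1 \mid C) \geq 0$ yields $I(\tilde{X_0}; X_0 \mid X_1, C) \leq I(\tilde{X_0}; X_0 \mid C)$. A symmetric chain-rule expansion of $I(X_0; X_1, \tilde{X_0} \mid C)$ shows that the difference $I(X_0; X_1 \mid C) - I(X_0; X_1 \mid \tilde{X_0}, C)$ equals $I(X_0; \tilde{X_0} \mid C) - I(X_0; \tilde{X_0} \mid X_1, C)$, which we have just shown is non-negative.

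The main obstacle is conceptual rather than computational: one must recognize that the left-hand side, which naively mixes a distribution conditioned on $x_0$ with one conditioned on $\tilde{x_0}$, is really a single conditional mutual information in disguise. That reformulation rests entirely on $X_1 \indep \tilde{X_0} \mid X_0, C$, which itself follows because $\tilde{X_0}$ influences $X_1$ only through the shared batch variable $B$, and hence only through $X_0$. Once this identification is made, the rest is a routine chain-rule manipulation, and the argument also makes the biological intuition precise: conditioning on $\tilde{X_0}$ helps exactly to the extent that $\tilde{X_0}$ carries information about $X_0$ (through $B$) beyond what $C$ already provides, as captured by $I(X_0; \tilde{X_0} \mid C)$.
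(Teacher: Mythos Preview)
Your proposal is correct and follows essentially the same strategy as the paper: identify the conditional independence $X_1 \indep \tilde{X_0} \mid X_0, C$, rewrite both sides as conditional mutual informations $I(X_0;X_1\mid C)$ and $I(X_0;X_1\mid \tilde{X_0},C)$, and then use the chain rule to show the gap is $I(X_1;\tilde{X_0}\mid C)\ge 0$. The only difference is cosmetic: the paper reaches this gap in one step by expanding $I(X_1;X_0,\tilde{X_0}\mid C)$ two ways, whereas you arrive at the same quantity via two chain-rule expansions (on $I(\tilde{X_0};X_0,X_1\mid C)$ and $I(X_0;X_1,\tilde{X_0}\mid C)$) that, when combined, reproduce the paper's identity.
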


\section{Method}
\label{sec:method}

As detailed in \S\ref{sec:background}, we predict cell morphological changes by transforming distributions between control and perturbed cells under specific conditions within the same batch. In this section, we introduce \emph{CellFlux}, which leverages flow matching, a principled framework for learning continuous transformations between probability distributions. We adapt flow matching with condition, noise augmentation, and classifier-free guidance to better address our problem setting.

\subsection{Preliminaries}

Flow matching~\cite{lipman2022flow,lipman2024flow} provides a framework to learn transformations between probability distributions by constructing smooth paths between paired samples (Figure~\ref{fig:overview}d). It models how a source distribution continuously deforms into a target distribution through time, similar to morphing one shape into another. 

More formally, consider probability distributions $p_0$ and $p_1$ defined on a metric space $(\mathcal{X}, d)$. Given pairs of samples from these distributions, flow matching learns a time-dependent velocity field using a neural network $v_\theta: \mathcal{X} \times [0,1] \rightarrow \mathcal{X}$ that describes the instantaneous direction and magnitude of change at each point. The transformation process follows an ordinary differential equation:
\[
{dx_t} = v_\theta(x_t, t){dt}, \quad x_0 \sim p_0, \quad x_1 \sim p_1, \quad t \in [0, 1]
\]

During training, we construct a probability path that connects samples from the source ($p_0$) and target ($p_1$) distributions (Figure~\ref{fig:cellflow}a). We employ the rectified flow formulation, which yields a simple straight-line path~\cite{liu2022flow}:
\[ x_t = (1-t)x_0 + tx_1, \quad t \sim \mathcal{U}[0,1] \]

This linear path has a constant velocity field $v(x_t, t) = dx_t/dt = x_1 - x_0$, which represents the optimal transport direction at each point. The neural network $v_\theta$ is trained to approximate this optimal velocity field by minimizing:
\[ \mathcal{L}(\theta) = \mathbb{E}_{x_0 \sim p_0, x_1 \sim p_1, t \sim \mathcal{U}[0,1]} \|v_\theta(x_t, t) - v(x_t, t)\|_2^2 \]

At inference time, given a sample $x_0 \sim p_0$, we generate $x_1$ by solving the ODE (Figure~\ref{fig:cellflow}b), whose solution is:
\[
x_1 = x_0 + \int_0^1 v_\theta(x_t, t)dt
\]
We employ numerical integrators like Euler method or more advanced methods such as Runge-Kutta to solve the ODE.

\begin{table*}[!tb]
    \small
    \centering
    \rowcolors{2}{white}{light-light-gray}
    \setlength\tabcolsep{6pt}
    \renewcommand{\arraystretch}{1.1}
    
    \textbf{(a) Main Results} \\
    \begin{tabular}{lcccccccccccc}
    \toprule
    & \multicolumn{4}{c}{BBBC021 (Chemical)} & \multicolumn{4}{c}{RxRx1 (Genetic)} & \multicolumn{4}{c}{JUMP (Combined)} \\
    Method & FID$_o$ & FID$_c$ & KID$_o$ & KID$_c$ & FID$_o$ & FID$_c$ & KID$_o$ & KID$_c$ & FID$_o$ & FID$_c$ & KID$_o$ & KID$_c$ \\
    \midrule
    PhenDiff (MICCAI'24) & 49.5 & 109.2 & 3.10 & 3.18 & 65.9 & 174.4 & 5.19 & 5.29 & 49.3 & 127.3 & 5.09 & 5.17\\
    IMPA (Nature Comm'25) & 33.7 & 76.5 & 2.60 & 2.70 & 41.6 & 164.8 & 2.91 & 2.94 & 14.6 & 99.9 & 1.08 & 1.06\\
    CellFlux & \textbf{18.7} & \textbf{56.8} & \textbf{1.62} & \textbf{1.59} & \textbf{33.0} & \textbf{163.5} & \textbf{2.38} & \textbf{2.40} & \textbf{9.0} & \textbf{84.4} & \textbf{0.63} & \textbf{0.65}\\
    \bottomrule
    \end{tabular}
    \vspace{0.3em}

    \setlength\tabcolsep{4pt}
    \renewcommand{\arraystretch}{1.1}
    
    \textbf{(b) Per Perturbation Results} \\
    \begin{tabular}{lccccccccc}
    \toprule
    & \multicolumn{6}{c}{Chemical Perturbations} & \multicolumn{3}{c}{Genetic Perturbations} \\
    Method & Alsterpaullone & AZ138 & Bryostatin & Colchicine & Mitomycin C & PP-2 & ACSS1 & CRISP3 & RASD1 \\
    \midrule    
    PhenDiff (MICCAI'24) & 106.6 & 120.0 & 106.9 & 111.2 & 110.0 & 121.7 & 157.5 & 144.6 & 180.4\\
    IMPA (Nature Comm'25) & 69.6 & 59.9 & 104.3 & 84.4 & 57.0 & 77.3 & 152.6 & 142.7 & 147.1\\
    CellFlux & \textbf{41.6} & \textbf{44.4} & \textbf{47.0} & \textbf{72.3} & \textbf{42.3} & \textbf{64.3} & \textbf{140.9} & \textbf{125.1} & \textbf{140.1}\\
    \bottomrule
    \end{tabular}
    \vspace{-1em}
    
    \caption{\textbf{Evaluation of \emph{CellFlux}.} 
    \textit{(a) Main results.} \emph{CellFlux} outperforms GAN- and diffusion-based baselines, achieving state-of-the-art performance in cellular morphology prediction across three chemical, genetic, and combined perturbations datasets. Metrics measure the distance between generated and ground-truth samples, with lower values indicating better performance. FID$_o$ (overall FID) evaluates all images, while FID$_c$ (conditional FID) averages results per perturbation $c$. KID values are scaled by 100 for visualization.
    \textit{(b) Per perturbation results.} For six representative chemical perturbations and three genetic perturbations, \emph{CellFlux} generates significantly more accurate images that better capture the perturbation effects than other methods, as measured by the FID score.
    }
    \label{tab:results}
    \vspace{-1em}
\end{table*}

\subsection{Conditional Flow Matching}

To model perturbation conditions, we extend flow matching by conditioning on perturbations $c \in \mathcal{C}$. While the source distribution $p_0$ represents unperturbed cell images, the target distribution now becomes condition-dependent, denoted as $p_1(x|c)$. Our goal is to learn a conditional velocity field $v_\theta: \mathcal{X} \times [0,1] \times \mathcal{C} \rightarrow \mathcal{X}$ that captures perturbation-specific transformations~\cite{esser2024scaling}:
\[
{dx_t} = v_\theta(x_t, t, c) {dt}, \quad x_0 \sim p_0, \quad x_1 \sim p_1(\cdot|c)
\]

\subsection{Classifier-Free Guidance}

We incorporate classifier-free guidance~\cite{ho2022classifier,zheng2023guided} to improve generation fidelity. During training, we randomly mask conditions with probability $p_c$, replacing $c$ with a null token $\emptyset$. At inference time, we interpolate between conditional and unconditional predictions:
\[
v_\theta^{\text{CFG}}(x_t, t, c) = \alpha \cdot v_\theta(x_t, t, c) + (1-\alpha) \cdot v_\theta(x_t, t, \emptyset)
\]
where $\alpha > 1$ controls guidance strength. 

\subsection{Noise Augmentation}
Since $p_0$ and $p_1$ are both empirical distributions from datasets with limited observations, direct mapping between them may lead to bad generalization. Therefore, we propose augmenting the samples to make the learned velocity field smoother. This is done by adding random Gaussian noise to $x_0 \sim p_0$ with a probability $p_e$. Formally:
\[ \tilde{x}_0 = \begin{cases} 
x_0 + \epsilon, & \text{with probability } p_e \\
x_0, & \text{with probability } 1-p_e
\end{cases} \]
where $\epsilon \sim \mathcal{N}(0, \sigma^2I)$. This noise augmentation helps prevent overfitting to discrete samples and encourages the model to learn a continuous velocity field in the ambient space. The noise scale $\sigma$ and probability $p_e$ are hyperparameters that control the smoothness of the learned field.

\subsection{Neural Network Architecture}
The velocity field $v_\theta$ is realized through a U-Net architecture~\cite{ronneberger2015u}, as we directly model the distribution in image pixel space $\mathcal{X} \subset \mathbb{R}^{H\times W\times C}$, where U-Net captures both local and global features through its multi-scale structure. Time $t$ is encoded using Fourier features, and condition $c \in \mathcal{C}$ is embedded through a learnable network $E: \mathcal{C} \rightarrow \mathbb{R}^d$. These embeddings are added to form the condition signal, which is then injected into the U-Net blocks to guide the generation process~\cite{esser2024scaling}. 

The entire \emph{CellFlux} algorithm is summarized in \S\ref{sec:algorithm}.

\begin{table*}[!tb]
    \small
    \centering
    \rowcolors{2}{white}{light-light-gray}
    \setlength\tabcolsep{6pt}
    \renewcommand{\arraystretch}{1.1}
    
    \begin{minipage}[t]{\textwidth}
        \centering
        \textbf{(a) Mode of Action Classification} \\
        \begin{tabular}{lccc}
        \toprule
        Method & MoA Accuracy & MoA Macro-F1 & MoA Weighted-F1  \\
        \midrule
        Groundtruth Perturbed Image & 72.4 & 69.7 & 72.1 \\
        \midrule
        PhenDiff & 52.6 & 33.6 & 52.1 \\
        IMPA & 63.7 & 40.2 & 64.8\\
        CellFlux & \textbf{71.1} & \textbf{49.0} & \textbf{70.7} \\
        \bottomrule
        \end{tabular}
    \end{minipage}

    \vspace{0.5em}
    
    \begin{minipage}[t]{\textwidth}
        \centering
        \textbf{(b) Out-of-Distribution Generalization} \\
        \begin{tabular}{lccccccc}
        \toprule
        Method & FID$_o$ & FID$_c$ & KID$_o$ & KID$_c$ & MoA Accuracy & MoA Macro-F1 & MoA Weighted-F1 \\
        \midrule
        Groundtruth Perturbed Image & 0.0 & 0.0 & 0.0 & 0.0 & 88.0 & 85.0 & 88.0 \\
        \midrule
        PhenDiff & 67.7 & 151.6 & 3.45 & 3.71 & 9.6 & 9.3 & 7.4 \\
        IMPA & 44.5 & 136.9 & 3.07 & 3.24 & 16.0 & 10.0 & 13.1 \\
        CellFlux & \textbf{42.0} & \textbf{98.0} & \textbf{1.31} & \textbf{1.23} & \textbf{43.2} & \textbf{36.6} & \textbf{42.8} \\
        \bottomrule
        \end{tabular}
    \end{minipage}

    \vspace{0.5em}
    
    \begin{minipage}[t]{\textwidth}
        \centering
        \textbf{(c) Batch Effect Study} \\
        \begin{tabular}{lccccccc}
        \toprule
        Method & FID$_o$ & FID$_c$ & KID$_o$ & KID$_c$ & MoA Accuracy & MoA Macro-F1 & MoA Weighted-F1 \\
        \midrule
        CellFlux w/ Other Batch Init & 19.9 & 66.3 & 1.70 & 1.69 & 48.2 & 32.9 & 48.4 \\
        CellFlux & \textbf{18.7} & \textbf{56.8} & \textbf{1.62} & \textbf{1.59} & \textbf{71.2} & \textbf{49.0} & \textbf{70.7} \\
        \bottomrule
        \end{tabular}
    \end{minipage}

    \vspace{0.5em}
    
    \begin{minipage}[t]{\textwidth}
        \centering
        \textbf{(d) Ablation Study} \\
        \begin{tabular}{lcccc}
        \toprule
        Method & FID$_o$ & FID$_c$ & KID$_o$ & KID$_c$ \\
        \midrule
        CellFlux w/o Condition & 45.0 & 113.0 & 2.37 & 2.37 \\
        CellFlux w/o CFG & 32.6 & 92.4 & \textbf{1.23} & 1.35 \\
        CellFlux w/o Noise & 31.9 & 91.4 & 1.24 & \textbf{1.26} \\
        CellFlux & \textbf{18.7} & \textbf{56.8} & 1.62 & 1.59 \\
        \bottomrule
        \end{tabular}
    \end{minipage}
    
    \vspace{-0.5em}
    
    \caption{\textbf{More evaluation and ablation of \emph{CellFlux}.} 
    \textit{(a) MoA classification.} On the BBBC021 dataset, we train a classifier to predict the drug's mode of action (MoA) from cell morphology images and evaluate the accuracy/F1 of generated images. \emph{CellFlux} achieves significantly higher accuracy/F1 than other methods, closely aligning with ground-truth images and effectively reflecting the biological effects of perturbations.
    \textit{(b) Out-of-distribution generalization.} \emph{CellFlux} maintains strong performance when generating cell morphology images for novel chemical compounds not seen during training on BBBC021.
    \textit{(c) Batch effect study.} \emph{CellFlux} shows improved performance when using control images from the same batch as initialization, highlighting the critical role of control images in calibrating batch effects.
    \textit{(d) Ablation study.} Removing key components degrades \emph{CellFlux}'s performance, emphasizing their importance. 
    }
    \label{tab:ablation}
    \vspace{-1em}
\end{table*}

\section{Results}
\label{sec:results}

In this section, we present detailed results demonstrating \emph{CellFlux}'s state-of-the-art performance in cellular morphology prediction under perturbations, outperforming existing methods across multiple datasets and evaluation metrics.

\subsection{Datasets}

Our experiments were conducted using three cell imaging perturbation datasets: BBBC021 (chemical perturbation)~\cite{caie2010high}, RxRx1 (genetic perturbation)~\cite{sypetkowski2023rxrx1}, and the JUMP dataset (combined perturbation)~\cite{chandrasekaran2023jump}. We followed the preprocessing protocol from IMPA~\cite{palma2023predicting}, which involves correcting illumination, cropping images centered on nuclei to a resolution of 96×96, and filtering out low-quality images. The resulting datasets include 98K, 171K, and 424K images with 3, 6, and 5 channels, respectively, from 26, 1,042, and 747 perturbation types. Examples of these images are provided in Figure~\ref{fig:comparison}. Details of datasets are provided in \S\ref{sec:data}.

\subsection{Experimental Setup}

\textbf{Evaluation metrics.} We evaluate methods using two types of metrics: (1) FID and KID (lower the better), which measure image distribution similarity via Fréchet and kernel-based distances, computed on 5K generated images for BBBC021 and 100 randomly selected perturbation classes for RxRx1 and JUMP; we report both overall scores across all samples and conditional scores per perturbation class. (2) Mode of Action (MoA) classification accuracy and F1 score (higher the better), which assesses biological fidelity by using a trained classifier to predict a drug’s effect from perturbed images and comparing it to its known MoA from the literature.

\textbf{Baselines.} We compare our approach against two baselines, PhenDiff~\cite{bourou2024phendiff} and IMPA~\cite{palma2023predicting}, the only two baselines that incorporate control images into their model design --- a crucial setup for distinguishing true perturbation effects from artifacts such as batch effects. PhenDiff uses diffusion models to first map control images to noise and then transform the noise into target images. In contrast, IMPA employs GANs with an AdaIN layer to transfer the style of control images to target images, specifically designed for paired image-to-image mappings. Our method uses flow matching, which is tailored for distribution-to-distribution mapping, providing a more suitable solution for our problem. We reproduce these baselines with official codes.

\textbf{Training details.} \emph{CellFlux} employs a UNet-based velocity field with a four-stage design. Perturbations are encoded following IMPA~\cite{palma2023predicting}. Training is conducted for 100 epochs on 4 A100 GPUs. Details are in \S\ref{sec:experimental}.

\begin{figure*}[!tb]
    \centering
    \includegraphics[width=\linewidth]{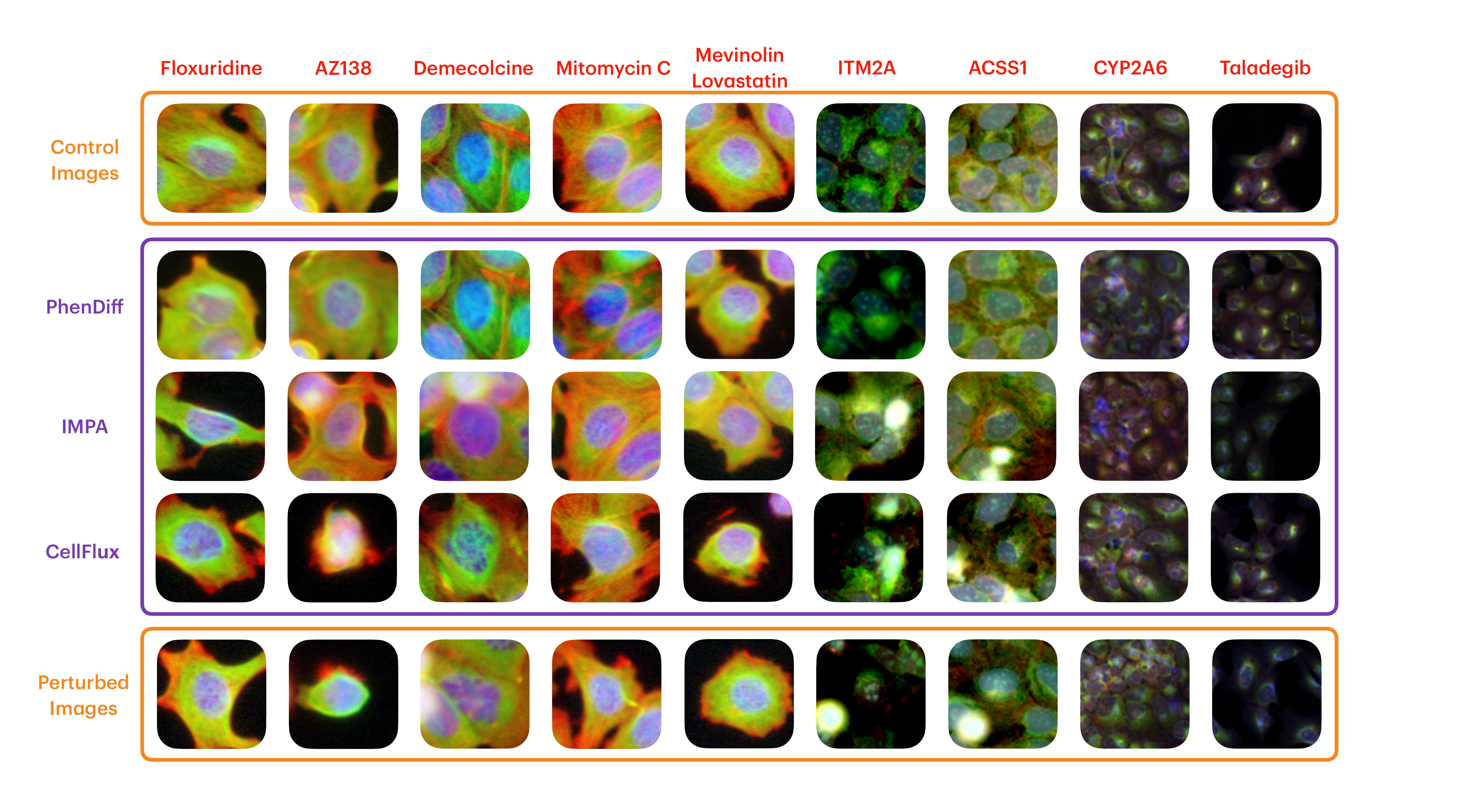}
    \vspace{-2em}
    \caption{\textbf{Qualitative comparisons.} \emph{CellFlux} generates significantly more accurate images that reflect the actual biological effects of perturbations compared to baselines. For example, Floxuridine inhibits DNA replication, leading to reduced cell density; AZ138 is an Eg5 inhibitor, causing cell death and shrinkage; Demecolcine destabilizes microtubules, resulting in smaller, fragmented nuclei. Columns 1–5, 6–7, and 8–9 correspond to samples from the BBBC021, RxRx1, and JUMP datasets, respectively. More drug's mode-of-action in \S\ref{sec:data}.}
    \label{fig:comparison}
    \vspace{-1em}
\end{figure*}

\subsection{Main Results}

\textbf{\emph{CellFlux} generates highly realistic cell images.}  
\emph{CellFlux} outperforms existing methods in capturing cellular morphology across all datasets (Table~\ref{tab:results}a), achieving overall FID scores of 18.7, 33.0, and 9.0 on BBBC021, RxRx1, and JUMP, respectively --- improving FID by 21\%–45\% compared to previous methods. These gains in both FID and KID metrics confirm that \emph{CellFlux} produces significantly more realistic cell images than prior approaches.

\textbf{\emph{CellFlux} accurately captures perturbation-specific morphological changes.}  
As shown in Table~\ref{tab:results}a, \emph{CellFlux} achieves conditional FID scores of 56.8 (a 26\% improvement), 163.5, and 84.4 (a 16\% improvement) on BBBC021, RxRx1, and JUMP, respectively. These scores are computed by measuring the distribution distance for each specific perturbation and averaging across all perturbations.   
Table~\ref{tab:results}b further highlights \emph{CellFlux}’s performance on six representative chemical and three genetic perturbations. For chemical perturbations, \emph{CellFlux} reduces FID scores by 14–55\% compared to prior methods.
The smaller improvement (5–12\% improvements) on RxRx1 is likely due to the limited number of images per perturbation type.

\textbf{\emph{CellFlux} preserves biological fidelity across perturbation conditions.} 
Table~\ref{tab:ablation}a presents mode of action (MoA) classification accuracy and F1 on the BBBC021 dataset using generated cell images. MoA describes how a drug affects cellular function and can be inferred from morphology. To assess this, we train an image classifier on real perturbed images and test it on generated ones. \emph{CellFlux} achieves 71.1\% MoA accuracy, closely matching real images (72.4\%) and significantly surpassing other methods (best: 63.7\%), demonstrating its ability to maintain biological fidelity across perturbations. Qualitative comparisons in Figure~\ref{fig:comparison} further highlight \emph{CellFlux}’s accuracy in capturing key biological effects. For example, demecolcine produces smaller, fragmented nuclei, which other methods fail to reproduce accurately.

\textbf{\emph{CellFlux} generalizes to out-of-distribution (OOD) perturbations.}  
On BBBC021, \emph{CellFlux} demonstrates strong generalization to novel chemical perturbations never seen during training (Table~\ref{tab:ablation}b). It achieves 6\%, 28\%, and 170\% improvements in overall/conditional FID and MoA accuracy over the best baseline. OOD generalization is critical for biological research, enabling the exploration of previously untested interventions and the design of new drugs.

\textbf{Ablations highlight the importance of each component in \emph{CellFlux}.}  
Table~\ref{tab:ablation}d shows that removing conditional information, classifier-free guidance, or noise augmentation significantly degrades performance, leading to higher FID scores. These underscore the critical role of each component in enabling \emph{CellFlux}’s state-of-the-art performance.  

\begin{figure*}[!tb]
    \centering
     \includegraphics[width=\linewidth]{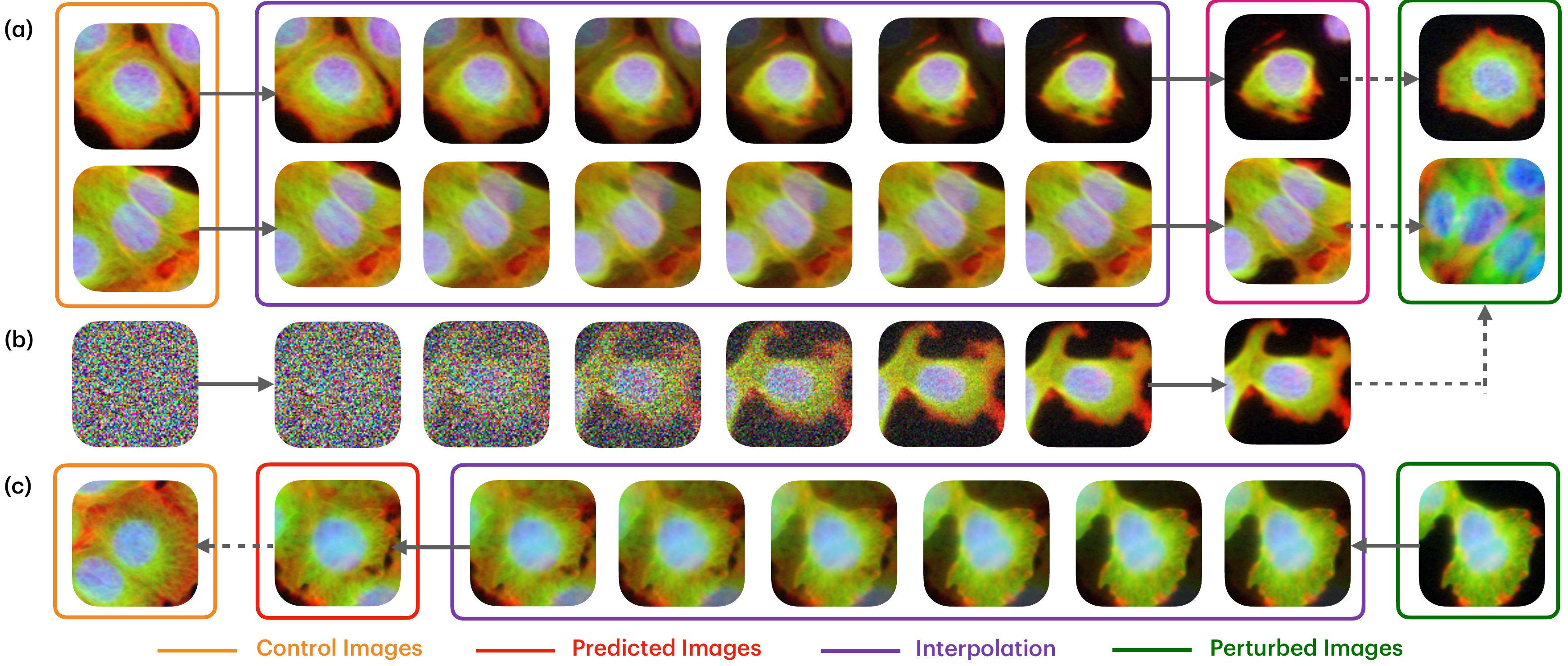}
     \vspace{-2em}
    \caption{
    \textbf{\emph{CellFlux} enables new capabilities.} 
\textit{(a.1) Batch effect calibration.}  
\emph{CellFlux} initializes with control images, enabling batch-specific predictions. Comparing predictions from different batches highlights actual perturbation effects (smaller cell size) while filtering out spurious batch effects (cell density variations).  
\textit{(a.2) Interpolation trajectory.}  
\emph{CellFlux}'s learned velocity field supports interpolation between cell states, which might provide insights into the dynamic cell trajectory. 
\textit{(b) Diffusion model comparison.}  
Unlike flow matching, diffusion models that start from noise cannot calibrate batch effects or support interpolation.  
\textit{(c) Reverse trajectory.}  
\emph{CellFlux}'s reversible velocity field can predict prior cell states from perturbed images, offering potential applications such as restoring damaged cells.
    }
    \label{fig:interpolation}
    \vspace{-1em}
\end{figure*}

\subsection{New Capabilities}

\textbf{\emph{CellFlux} addresses batch effects and reveals true perturbation effects.}  
\emph{CellFlux}’s distribution-to-distribution approach effectively addresses batch effects, a significant challenge in biological experimental data collection. As shown in Figure~\ref{fig:interpolation}a, when conditioned on two distinct control images with varying cell densities from different batches, \emph{CellFlux} consistently generates the expected perturbation effect (cell shrinkage due to mevinolin) while recapitulating batch-specific artifacts, revealing the true perturbation effect. Table~\ref{tab:ablation}c further quantifies the importance of conditioning on the same batch. By comparing generated images conditioned on control images from the same or different batches against the target perturbation images, we find that same-batch conditioning improves conditional FID and MoA accuracy by 14\% and 48\%. This highlights the importance of modeling control images to more accurately capture true perturbation effects—an aspect often overlooked by prior approaches, such as diffusion models that initialize from noise (Figure~\ref{fig:interpolation}b).

\textbf{\emph{CellFlux} has the potential to model cellular morphological change trajectories.}
Cell trajectories could offer valuable information about perturbation mechanisms, but capturing them with current imaging technologies remains challenging due to their destructive nature. Since \emph{CellFlux} continuously transforms the source distribution into the target distribution, it can generate smooth interpolation paths between initial and final predicted cell states, producing video-like sequences of cellular transformation based on given source images (Figure~\ref{fig:interpolation}a). This suggests a possible approach for simulating morphological trajectories during perturbation response, which diffusion methods cannot achieve (Figure~\ref{fig:interpolation}b). Additionally, the reversible distribution transformation learned through flow matching enables \emph{CellFlux} to model backward cell state reversion (Figure~\ref{fig:interpolation}c), which could be useful for studying recovery dynamics and predicting potential treatment outcomes.

\section{Related Works}
\label{sec:related}

\textbf{Generative models.}
Generative models are a fundamental class of machine learning approaches that learn to model and sample from probability distributions. Traditional methods such as autoregressive models, normalizing flows, and GANs face limitations in generation speed, expressiveness, or training stability~\cite{van2016pixel,papamakarios2021normalizing,goodfellow2014generative}. Recent score-based approaches, particularly diffusion models~\cite{sohl2015deep,song2019generative,ho2020denoising} and flow matching~\cite{lipman2022flow,lipman2024flow,liu2022flow,liu2024flowing}, address these challenges by learning continuous-time transformations between distributions, achieving state-of-the-art performance in generating images, videos, and biological sequences~\cite{openai2024gpt4technicalreport,esser2024scaling,pmlr-v235-kondratyuk24a,hayes2025simulating}. Unlike diffusion models, which map from Gaussian noise, flow matching directly transforms between arbitrary distributions. This property remains underexplored in machine learning due to limited application scenarios~\cite{liu2024flowing}, yet it is particularly well-suited for cellular morphology prediction, where accurately modeling the transition from unperturbed to perturbed cell states is crucial.

\textbf{Cellular morphology prediction.} Cellular morphology serves as a powerful phenotypic readout in biological research, offering critical insights into cellular states~\cite{perlman2004multidimensional,loo2007image}. Predicting morphological changes \textit{in silico} enables rapid virtual drug screening and the development of personalized therapeutic strategies, significantly accelerating biomedical discoveries~\cite{carpenter2007image,bunne2024build}. While initial progress has been made in this direction, existing approaches face three major limitations. Some neglect control cell images, failing to capture true perturbation changes and making predictions vulnerable to batch effects~\cite{yang2021mol2image,navidi2024morphodiff,cook2024diffusion}. Others rely on outdated generative techniques such as normalizing flows and GANs, which suffer from training instability and limited image fidelity~\cite{lamiable2023revealing,palma2023predicting}. Additionally, some methods use suboptimal approaches to model distribution transformation, such as a two-step diffusion process~\cite{bourou2024phendiff,hung2024lumic}. Our work addresses these challenges by reframing morphology prediction as a distribution-to-distribution translation problem and leveraging flow matching, which naturally models cellular state transformations while ensuring high image quality and stable training, paving the way for constructing virtual cells for biomedical research.  

\section{Conclusion}
\label{sec:conclusion}

In this work, we introduce \emph{CellFlux}, a method that leverages flow matching to generate cell images under various perturbations while capturing their trajectories, paving the way for the development of a virtual cell framework for biomedical research. In future work, we plan to scale up \emph{CellFlux} to process terabytes of imaging data encompassing diverse cell types and a wide range of perturbations, enabling the full potential of virtual cell modeling.

\newpage
\section*{Acknowledgments}
\label{sec:acknowledge}

This work is partially supported by the Hoffman-Yee Research Grants. E.L. and S.Y. are Chan Zuckerberg Biohub — San Francisco Investigators.

\section*{Impact Statement}
\label{sec:impact}

\emph{CellFlux} introduces a novel machine learning framework for modeling cellular responses to genetic and chemical perturbations by formulating the task as a distribution-to-distribution transformation and solving it using a principled flow matching approach. This leads to significantly improved predictive performance and unlocks new capabilities such as batch effect correction and perturbation interpolation.

By providing scalable and interpretable computational tools for modeling perturbation responses at both the single-cell and population levels, \emph{CellFlux} addresses critical challenges in experimental biology. It enables rapid in-silico screening of compounds and perturbations, thereby accelerating therapeutic discovery and drug repurposing. In particular, it can guide follow-up experiments toward the most promising candidates, streamlining the drug repurposing pipeline and the search for novel therapeutic targets. In addition to medical applications, \emph{CellFlux} can accelerate basic research into cell biology processes by modeling responses to genetic or chemical perturbations. 

However, we acknowledge that these are early attempts to model complex and dynamic biological systems, and future research with larger and more diverse datasets will improve performance. For instance, we are limited by current datasets that focus on a few cancer cell lines, which could introduce bias and may not fully represent normal physiology. Furthermore, while our method enables interpolation between cell states, the biological validity of these interpolations remains unverified; establishing their plausibility will require future work involving ground-truth data and experimental validation. 

Despite these limitations, \emph{CellFlux} bridges machine learning and cellular biology, enabling new frontiers in virtual cell modeling, drug discovery, and systems biology research with broad implications for science and medicine.

\newpage
\bibliography{example_paper}
\bibliographystyle{icml2025}

\newpage
\appendix
\onecolumn
\section*{Summary of Appendix}

\begin{itemize}
\item \S\ref{sec:theory} presents a formal proof supporting our mathematical formulation.
\item \S\ref{sec:algorithm} details the \emph{CellFlux} algorithm.
\item \S\ref{sec:experimental} provides additional experimental details.
\item \S\ref{sec:batch_effect} offers a more in-depth discussion of batch effects.
\item \S\ref{sec:data} describes the datasets used in our study.
\item \S\ref{sec:comparison} includes qualitative comparisons of \emph{CellFlux} against baselines.
\item \S\ref{sec:trajectory} presents additional visualization of bidirectional trajectories between control images and perturbed images.
\item \S\ref{sec:results_appendix} provides additional results comparing our method to baselines.
\item \S\ref{sec:related_appendix} provides a table to compare our work with related works.
\item \S\ref{sec:biological_validation} discusses future directions for validating the interpolation trajectories generated by \emph{CellFlux}.
\end{itemize}

\newpage
\section{Theory Proof}
\label{sec:theory}

\begin{center}
\begin{tikzpicture}[->, >=stealth, node distance=2cm, font=\sffamily]
    \node[draw, circle, fill=gray!30] (b) {$B$};
    \node[draw, circle, fill=gray!30, right of=b] (c) {$C$};
    \node[draw, circle, fill=gray!30, below left=of b] (x0tilde) {$\tilde{X_0}$};
    \node[draw, circle, below of=b] (x0) {$X_0$};
    \node[draw, circle, fill=gray!30, below of=c] (x1) {$X_1$};
    \draw (b) -- (x0);
    \draw (c) -- (x1);
    \draw (x0) -- (x1);
    \draw (b) -- (x0tilde);
\end{tikzpicture}
\end{center}

\setcounter{prop}{0}
\begin{prop}
    \label{prop:cond2}
    Given random variables $B$, $C$, $X_0$, $\tilde{X_0}$, and $X_1$ following the graphical model above with joint distribution $p(b, c, x_0, \tilde{x_0}, x_1)$, the distribution $p(x_1|x_0,c)$ can be better approximated by the conditional distribution $p(x_1|\tilde{x_0},c)$ than $p(x_1|c)$ in expectation. Formally,
    $$\mathbb{E}_{p(x_0, \tilde{x_0}, c)}\left[ D_{\text{KL}}(p(x_1|x_0,c)||p(x_1|\tilde{x_0},c)) \right] \leq \mathbb{E}_{p(x_0, c)}\left[ D_{\text{KL}}(p(x_1|x_0,c)||p(x_1|c)) \right]$$
\end{prop}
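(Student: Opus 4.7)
The plan is to recognize both sides as conditional mutual informations and then apply the chain rule together with a conditional independence that is immediate from the graphical model.

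First I would rewrite the right-hand side. By unfolding the definition of conditional mutual information,
\[
I(X_0; X_1 \mid C) = \mathbb{E}_{p(x_0,c)}\!\left[ D_{\mathrm{KL}}\bigl(p(x_1\mid x_0,c)\,\|\,p(x_1\mid c)\bigr) \right],
\]
which matches the RHS exactly. For the LHS, the key observation from the DAG is that $\tilde{X_0}$ reaches $X_1$ only through the path $\tilde{X_0} \leftarrow B \to X_0 \to X_1$, which is blocked by $X_0$. Hence $\tilde{X_0} \perp X_1 \mid X_0, C$, so $p(x_1\mid x_0, c) = p(x_1\mid x_0, \tilde{x_0}, c)$. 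Substituting this identity lets me rewrite the LHS as
\[
\mathbb{E}_{p(x_0,\tilde{x_0},c)}\!\left[ D_{\mathrm{KL}}\bigl(p(x_1\mid x_0,\tilde{x_0},c)\,\|\,p(x_1\mid \tilde{x_0},c)\bigr) \right] = I(X_0; X_1 \mid \tilde{X_0}, C).
\]
So the proposition reduces to the clean inequality $I(X_0; X_1 \mid \tilde{X_0}, C) \leq I(X_0; X_1 \mid C)$.

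Next I would apply the chain rule for conditional mutual information in two ways:
\begin{align*}
I(X_0, \tilde{X_0}; X_1 \mid C) &= I(X_0; X_1 \mid C) + I(\tilde{X_0}; X_1 \mid X_0, C), \\
I(X_0, \tilde{X_0}; X_1 \mid C) &= I(\tilde{X_0}; X_1 \mid C) + I(X_0; X_1 \mid \tilde{X_0}, C).
\end{align*}
The conditional independence established above gives $I(\tilde{X_0}; X_1 \mid X_0, C) = 0$, so equating the two expansions yields
\[
I(X_0; X_1 \mid C) - I(X_0; X_1 \mid \tilde{X_0}, C) = I(\tilde{X_0}; X_1 \mid C) \geq 0,
\]
which is exactly the claim.

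The main obstacle is not analytic but conceptual: justifying the d-separation $\tilde{X_0} \perp X_1 \mid X_0, C$ precisely, because this is the single non-trivial fact the proof hinges on. Once that is in place, everything else is a mechanical application of the KL-to-mutual-information identity and the mutual information chain rule, and non-negativity of mutual information delivers the inequality. I would also note in passing that the gap $I(\tilde{X_0}; X_1 \mid C)$ has an intuitive reading: it quantifies how much batch-level information a same-batch control carries about the perturbed state beyond the perturbation label alone, which is exactly the benefit being claimed.
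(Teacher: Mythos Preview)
Your proposal is correct and follows essentially the same argument as the paper: both sides are identified with conditional mutual informations via the conditional independence $\tilde{X_0}\perp X_1\mid X_0,C$, and the chain rule then gives $I(X_0;X_1\mid C)=I(X_0;X_1\mid \tilde{X_0},C)+I(\tilde{X_0};X_1\mid C)$, from which nonnegativity of mutual information yields the inequality. Your d-separation justification and interpretation of the gap term $I(\tilde{X_0};X_1\mid C)$ are nice additions but do not depart from the paper's route.
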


\begin{proof}
    According to the definition of conditional mutual information, the term on the right-hand side can be expressed as:
    \begin{equation}
    \begin{split}
        \mathbb{E}_{p(x_0, c)}\left[ D_{\text{KL}}(p(x_1|x_0,c)||p(x_1|c)) \right] &= \int p(x_0, c) p(x_1|x_0,c) \log \frac{p(x_1|x_0,c)}{p(x_1|c)} dx_1dx_0dc \nonumber\\
        &= \int p(c) \mathbb{E}_{p(x_0,x_1|c)} \left[ \log \frac{p(x_1,x_0|c)}{p(x_1|c) p(x_0|c)}\right] dc \nonumber\\
        &= \mathbb{E}_{p(c)} \left[ D_{\text{KL}}(p(x_1,x_0|c)||p(x_1|c)p(x_0|c)) \right] = I(X_1;X_0|C)
    \end{split}
    \end{equation}

    Based on the graphical model, we have the conditional independence $X_1 \indep \tilde{X_0}|X_0,C$. Thus, we have $p(x_1|x_0,c)=p(x_1|x_0, \tilde{x_0},c)$. Similarly, we can express the term on the right-hand side as conditional mutual information:
    \begin{equation}
        \begin{split}
            \mathbb{E}_{p(x_0, \tilde{x_0}, c)}\left[ D_{\text{KL}}(p(x_1|x_0,c)||p(x_1|\tilde{x_0},c)) \right] &= \mathbb{E}_{p(x_0, \tilde{x_0}, c)}\left[ D_{\text{KL}}(p(x_1|x_0, \tilde{x_0},c)||p(x_1|\tilde{x_0},c)) \right]\nonumber\\
            &= \int p(x_0, \tilde{x_0}, c) p(x_1|x_0,\tilde{x_0},c) \log \frac{p(x_1|x_0, \tilde{x_0},c)}{p(x_1|\tilde{x_0},c)} dx_1d\tilde{x_0}dx_0dc \nonumber\\
            &= I(X_1;X_0|\tilde{X_0},C)
        \end{split}
    \end{equation}

Further, based on the property of conditional mutual information, we have
\begin{equation}
    \begin{split}
    I(X_1;X_0|C)&=I(X_1;X_0|\tilde{X_0},C) + I(X_1;\tilde{X_0}|C) - I(X_1;\tilde{X_0}|X_0,C)\nonumber\\
    &=I(X_1;X_0|\tilde{X_0},C) + I(X_1;\tilde{X_0}|C)
    \end{split}
\end{equation}
where the second equality is due to the conditional independence relationship $X_1 \indep \tilde{X_0}|X_0,C$, and $I(X_1;\tilde{X_0}|X_0,C)=0$

Therefore, 
\begin{equation}
    \begin{split}
    \mathbb{E}_{p(x_0, c)}\left[ D_{\text{KL}}(p(x_1|x_0,c)||p(x_1|c)) \right] &= \mathbb{E}_{p(x_0, \tilde{x_0}, c)}\left[ D_{\text{KL}}(p(x_1|x_0,c)||p(x_1|\tilde{x_0},c)) \right] + I(X_1;\tilde{X_0}|C) \nonumber\\
    &\geq \mathbb{E}_{p(x_0, \tilde{x_0}, c)}\left[ D_{\text{KL}}(p(x_1|x_0,c)||p(x_1|\tilde{x_0},c)) \right] 
    \end{split}
\end{equation}

The inequality holds strictly when $I(X_1;\tilde{X_0}|C)>0$, i.e., $X_1\notindep \tilde{X_0}|C$, which generally holds true when batch effect exists and variables $X_0$ and $\tilde{X_0}$ are associated by $B$ according to the graphical model.

\end{proof}

\newpage
\section{\emph{CellFlux} Algorithm}
\label{sec:algorithm}

\begin{algorithm}[htbp]
\caption{\emph{CellFlux} Algorithm}
\label{alg:flow_matching}

\begin{minipage}{\linewidth}
\begin{algorithmic}
\STATE \hspace{-1em}\textbf{Training Process:}
\INPUT Initial distribution $p_0$, target distribution $p_1$, perturbation $c$, 
neural network $v_\theta(x_t, t, c)$, noise injection probability $p_n$, condition drop probability $p_c$, learning rate $\eta$, number of iterations $N$
\OUTPUT Trained neural network $v_\theta$

\FOR{each iteration $i = 1, \ldots, N$}
    \STATE Sample $x_0 \sim p_0$ and $x_1 \sim p_1$
    \STATE Sample $t \sim \text{Uniform}[0, 1]$
    \STATE Inject noise $x_0 \gets x_0 + \epsilon$, $\epsilon \sim \mathcal{N}(0,I)$ with $p_n$
    \STATE Drop condition $c \gets \phi$ with $p_c$
    \STATE Interpolate $x_t \gets t x_1 + (1-t)x_0$
    \STATE Compute true velocity $v(x_t, t, c) \gets x_1 - x_0$
    \STATE Predict velocity using neural network $v_\theta(x_t, t, c)$
    \STATE Compute loss $\mathcal{L} \gets \| v_\theta(x_t, t, c) - v(x_t, t, c) \|_2^2$
    \STATE Update $\theta$ using gradient descent $\theta \gets \theta - \eta \nabla_\theta \mathcal{L}$
\ENDFOR
\vspace{1em}
\STATE \hspace{-1em}\textbf{Inference Process:}
\INPUT Initial sample $x_0 \sim p_0$, perturbation $c$, step size $\Delta t$, classifier-free guidance strength $\alpha$
\OUTPUT Generated sample $x_1 \sim p_1$
\STATE Initialize $x_t \gets x_0$
\FOR{$t = 0$ to $1$ with step size $\Delta t$}
    \STATE Computer velocity with classifier-free guidance $v_\theta^\text{CFG}(x_t, t, c) \gets \alpha v_\theta(x_t, t, c) + (1-\alpha) v_\theta(x_t, t, \emptyset)$
    \STATE Update $x_t \gets x_t + \Delta t \cdot v_\theta^\text{CFG}(x_t, t, c)$
\ENDFOR
\STATE Output final state $x_1 \gets x_t$
\end{algorithmic}
\end{minipage}
\label{alg:celldiff}
\end{algorithm}
Algorithm \ref{alg:celldiff} provides a detailed overview of the \emph{CellFlux} algorithm, covering both training and inference. During training, the model learns to predict velocity between an initial and target distribution by interpolating between samples, applying noise and condition dropout, and optimizing an L2 loss between predicted and true velocities. In inference, the trained model iteratively updates a sample from the initial distribution toward the target distribution using classifier-free guidance, ultimately generating a new sample that aligns with the target distribution.

\newpage
\section{Experimental Details}
\label{sec:experimental}

\textbf{Model architecture.} \emph{CellFlux} employs a UNet-based velocity field parameterization with input and output channels matching the dataset. It features four stages for downsampling and upsampling, with each stage halving or doubling the resolution and using a hidden size of 128. This hierarchical UNet design focuses on efficient 2D spatial learning for pixel-level flow matching.

\textbf{Perturbation encoding.} We encode perturbations following IMPA’s approach~\cite{palma2023predicting}. For chemical embeddings, we use 1024-dimensional Morgan Fingerprints generated with RDKit. For gene embeddings, CRISPR and ORF embeddings combine Gene2Vec with HyenaDNA-derived sequence representations, resulting in final dimensions of 328 and 456, respectively.

\textbf{Training details.} Models are trained for 100 epochs on 4 A100 GPUs using the Adam optimizer with a learning rate of 1e-4 and a batch size of 128, requiring 8, 16, and 36 hours for BBBC021, RxRx1, and JUMP, respectively. The noise injection probability, condition drop probability, and classifier-free guidance strength are set to 0.5, 0.2, and 1.2, respectively. Models are selected based on the lowest FID scores on the validation set.

\newpage
\section{Batch Effects}
\label{sec:batch_effect}

\textbf{1. What Are Batch Effects in Microscopy Experiments?}  

Batch effects refer to a form of \textit{distribution shift} in microscopy experiments, where non-biological variations arise due to differences in experimental conditions across imaging sessions or batches. These effects can be classified as a type of \textit{covariate shift}, where technical factors alter the distribution of input features, including:

\begin{itemize}[itemsep=0pt, parsep=0pt, topsep=0pt]
    \item \textbf{Microscope or camera settings} – Variations in sensor sensitivity, illumination, resolution, or imaging modalities.
    \item \textbf{Experimental procedures} – Differences in sample preparation, staining protocols, reagent batches, or handling by different researchers.
    \item \textbf{Environmental conditions} – Changes in temperature, humidity, or laboratory-specific conditions that may be difficult to control.
\end{itemize}

As a result, images of biologically identical cells may appear different solely due to variations in imaging conditions rather than biological differences.

\textbf{2. Why Do Batch Effects Matter?}  

Batch effects pose a major challenge to reproducible biomedical research by obscuring true biological effects of perturbations. Additionally, machine learning models may inadvertently learn batch-specific artifacts instead of meaningful biological patterns. Key issues caused by batch effects include:

\begin{itemize}[itemsep=0pt, parsep=0pt, topsep=0pt]
    \item \textbf{Poor generalization} – Models trained on batch-affected images may fail to classify new samples from a different experimental setup.
    \item \textbf{False discoveries} – Uncorrected batch effects can confound biological signals, leading to misleading conclusions.
    \item \textbf{Reduced reproducibility} – Results may not replicate across laboratories or imaging systems due to unaccounted technical biases.
\end{itemize}

\textbf{3. Visualization of Batch Effects}  

Figure~\ref{fig:batch_effect} visualizes three batches of BBBC021 images using PCA, showing that each batch forms a distinct cluster. Notably, control (ctrl) and perturbed (trt) images from the same batch cluster together, rather than forming separate control and target clusters. This illustrates the \textbf{batch effect}—a systematic bias within each batch that is unrelated to the perturbation itself.

\textbf{4. How \emph{CellFlux} Addresses Batch Effects?}  

\emph{CellFlux} mitigates batch effects by using control images as initialization during both training and inference, transporting them to target images within the same batch. This ensures that the model learns only the \textbf{relative difference} between control and perturbed images. By conditioning on control images from different batches, \emph{CellFlux} effectively captures the \textbf{true perturbation effect} while preserving batch-specific artifacts. Figure~\ref{fig:batch_effect} demonstrates this, showing that predicted images remain within the same batch cluster when given a control image from that cluster.

\vspace{-1em}
\begin{figure*}[htbp]
    \centering
    \includegraphics[width=0.49\linewidth]{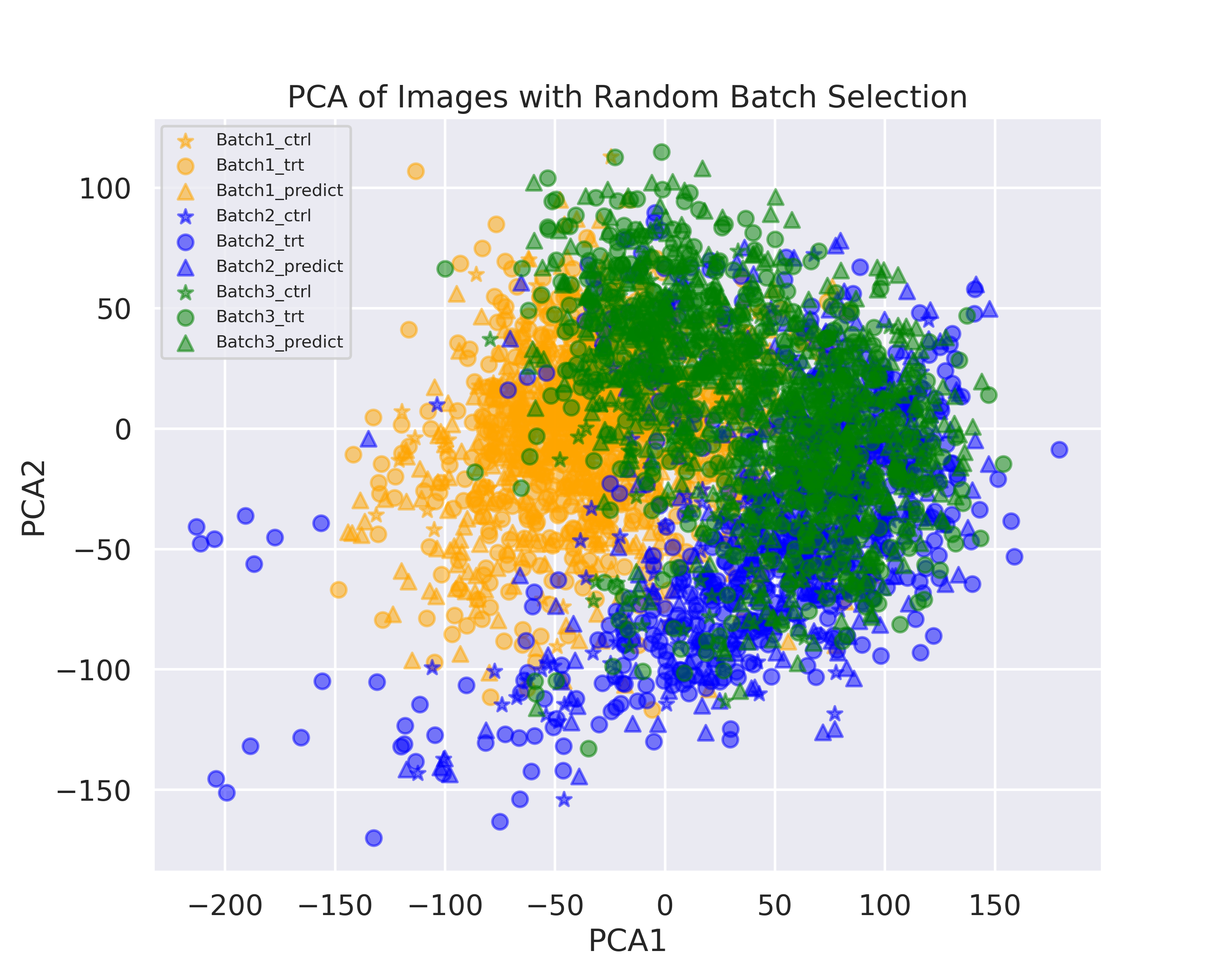}
    \caption{\textbf{Visualization of batch effects in BBBC021 and how \emph{CellFlux} addresses batch effects.}  }
    \vspace{-1em}
    
    \label{fig:batch_effect}
\end{figure*}

\newpage
\section{Datasets}
\label{sec:data}
As described below, all data used in this study are publicly available and utilized under their respective licenses. No new data were generated for this study.

\textbf{BBBC021 dataset.}  
We utilized the BBBC021v1 image set~\cite{caie2010high}, available from the \href{https://bbbc.broadinstitute.org/BBBC021}{Broad Bioimage Benchmark Collection}~\cite{ljosa2012annotated}. The BBBC021 dataset focuses on chemical perturbations in MCF-7 breast cancer cells, serving as a robust benchmark for image-based phenotypic profiling. It comprises 97,504 fluorescent microscopy images of cells treated with 113 small molecules across eight concentrations, targeting diverse cellular mechanisms such as actin disruption, Aurora kinase inhibition, and microtubule stabilization. Each image includes multi-channel labels for DNA, F-actin, and beta-tubulin, facilitating detailed morphological analysis. Metadata provides mechanism-of-action (MOA) annotations for compounds and experimental conditions, enabling applications in mechanistic prediction and phenotypic similarity analysis. Table \ref{tab:moa} shows MoA classes for all BBBC021 perturbations. Images were processed at a resolution suitable for segmentation and deep learning tasks.

\textbf{RxRx1 dataset.}  
The RxRx1 dataset~\cite{sypetkowski2023rxrx1}, available under a \href{https://creativecommons.org/licenses/by-nc-sa/4.0/}{CC-BY-NC-SA-4.0 license} from Recursion Pharmaceuticals at \href{https://www.rxrx.ai/rxrx1\#Download}{rxrx.ai}, focuses on genetic perturbations using CRISPR-mediated gene knockouts. It contains 170,943 images representing 1,042 genetic perturbations in HUVEC cells, with control conditions to address experimental variability. Images were captured across six channels, including nuclear and cytoskeletal markers, enabling high-dimensional phenotypic analysis. Preprocessing steps included segmentation, cropping, and resizing to standardize the data for computational analysis. This dataset supports tasks such as feature extraction, phenotypic clustering, and representation learning.

\textbf{JUMP dataset (CPJUMP1).}  
The JUMP dataset~\cite{chandrasekaran2023jump}, available under a \href{https://creativecommons.org/publicdomain/zero/1.0/deed.en}{CC0 1.0 license}, integrates both genetic and chemical perturbations, offering the most comprehensive image-based profiling resource to date. It includes approximately 3 million images capturing the phenotypic responses of 75 million single cells to genetic knockouts (CRISPR/ORF) and chemical perturbations. Key features include:  

\begin{itemize}
    \item \textbf{Chemical-genetic pairing:} Perturbations targeting the same genes are tested in parallel to assess phenotypic convergence or divergence.  
    \item \textbf{Controlled conditions:} Imaging was standardized across cell types (U2OS and A549), time points (short and extended durations), and experimental setups.  
    \item \textbf{Primary group:} Forty plates profiling CRISPR knockouts and ORF overexpression.  
    \item \textbf{Secondary group:} Additional plates exploring extended experimental conditions.  
\end{itemize}

The JUMP dataset uniquely enables the study of phenotypic relationships between genetic and chemical perturbations and supports the development of predictive models for multi-modal cellular responses. Public access to the dataset and associated analysis pipelines is available via \href{https://broad.io/cpjump1}{Broad's JUMP repository}.

\begin{table}[htbp]
    \centering
    \small
    \rowcolors{2}{white}{light-light-gray}
    \setlength\tabcolsep{6pt}
    \renewcommand{\arraystretch}{1.1}
    \begin{tabular}{ll}
        \toprule
        \textbf{Compound} & \textbf{MoA} \\
        \midrule
        Cytochalasin B & Actin disruptors \\
        Cytochalasin D & Actin disruptors \\
        Latrunculin B & Actin disruptors \\
        AZ258 & Aurora kinase inhibitors \\
        AZ841 & Aurora kinase inhibitors \\
        Mevinolin/Lovastatin & Cholesterol-lowering \\
        Simvastatin & Cholesterol-lowering \\
        Chlorambucil & DNA damage \\
        Cisplatin & DNA damage \\
        Etoposide & DNA damage \\
        Mitomycin C & DNA damage \\
        Camptothecin & DNA replication \\
        Floxuridine & DNA replication \\
        Methotrexate & DNA replication \\
        Mitoxantrone & DNA replication \\
        AZ138 & Eg5 inhibitors \\
        PP-2 & Epithelial \\
        Alsterpaullone & Kinase inhibitors \\
        Bryostatin & Kinase inhibitors \\
        PD-169316 & Kinase inhibitors \\
        Colchicine & Microtubule destabilizers \\
        Demecolcine & Microtubule destabilizers \\
        Nocodazole & Microtubule destabilizers \\
        Vincristine & Microtubule destabilizers \\
        Docetaxel & Microtubule stabilizers \\
        Epothilone B & Microtubule stabilizers \\
        Taxol & Microtubule stabilizers \\
        ALLN & Protein degradation \\
        Lactacystin & Protein degradation \\
        MG-132 & Protein degradation \\
        Proteasome inhibitor I & Protein degradation \\
        Anisomycin & Protein synthesis \\
        Cyclohexamide & Protein synthesis \\
        Emetine & Protein synthesis \\
        DMSO & DMSO \\
        \bottomrule
    \end{tabular}
    \caption{\textbf{Modes of action (MoA) for compounds in BBBC021.}}
    \label{tab:moa}
\end{table}

\newpage
\section{Qualitative Comparison}
\label{sec:comparison}

In this section, we present additional generated samples to further demonstrate the effectiveness of our method. Figures \ref{fig:qualitative_bbbc}, \ref{fig:qualitative_rxrx1}, and \ref{fig:qualitative_jump} show qualitative comparisons on the BBBC021, RxRx1, and JUMP datasets, respectively. Our approach more accurately captures key biological effects, whereas images generated by IMPA fail to reflect real biological responses, and those from PhenDiff appear blurry with significant detail loss.

\begin{figure*}[htbp]
    \centering
    \includegraphics[width=0.74\linewidth]{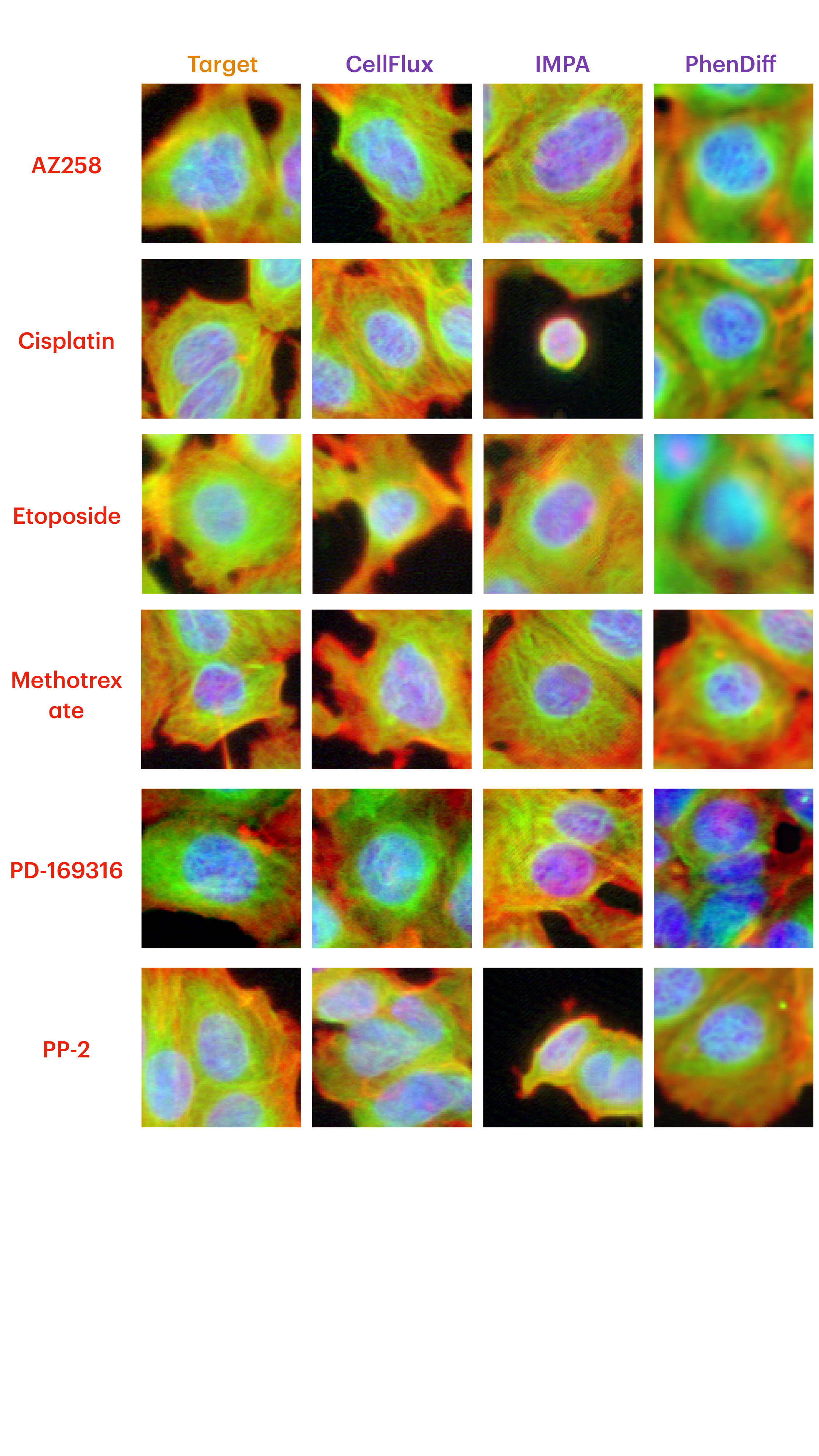}
    \caption{\textbf{More qualitative comparisons of generated samples on BBBC021.}  }
    \label{fig:qualitative_bbbc}
\end{figure*}
\begin{figure*}[htbp]
    \centering
    \includegraphics[width=0.74\linewidth]{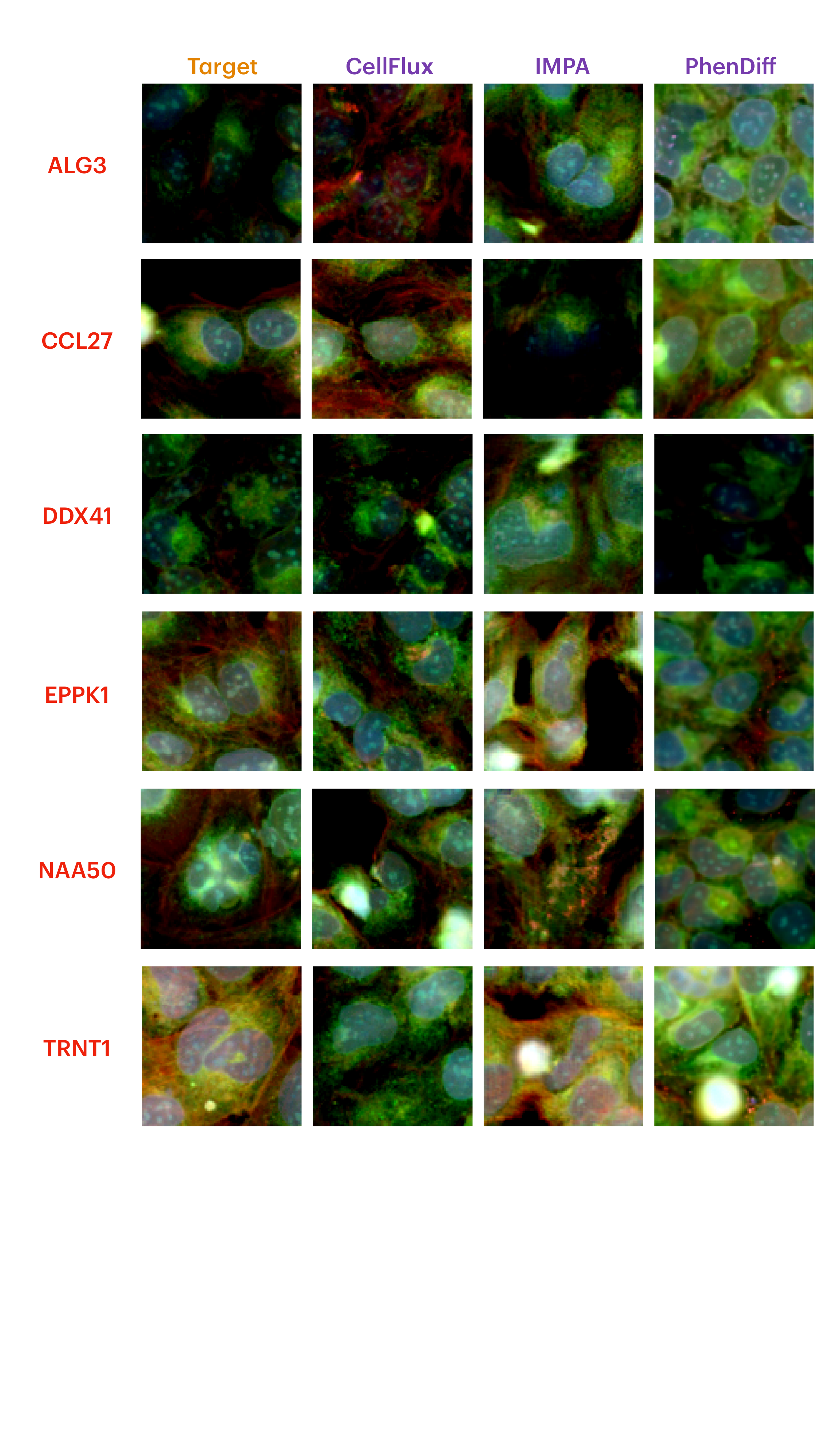}
    \caption{\textbf{More qualitative comparisons of generated samples on RxRx1.}  }
    \label{fig:qualitative_rxrx1}
\end{figure*}
\begin{figure*}[htbp]
    \centering
    \includegraphics[width=0.74\linewidth]{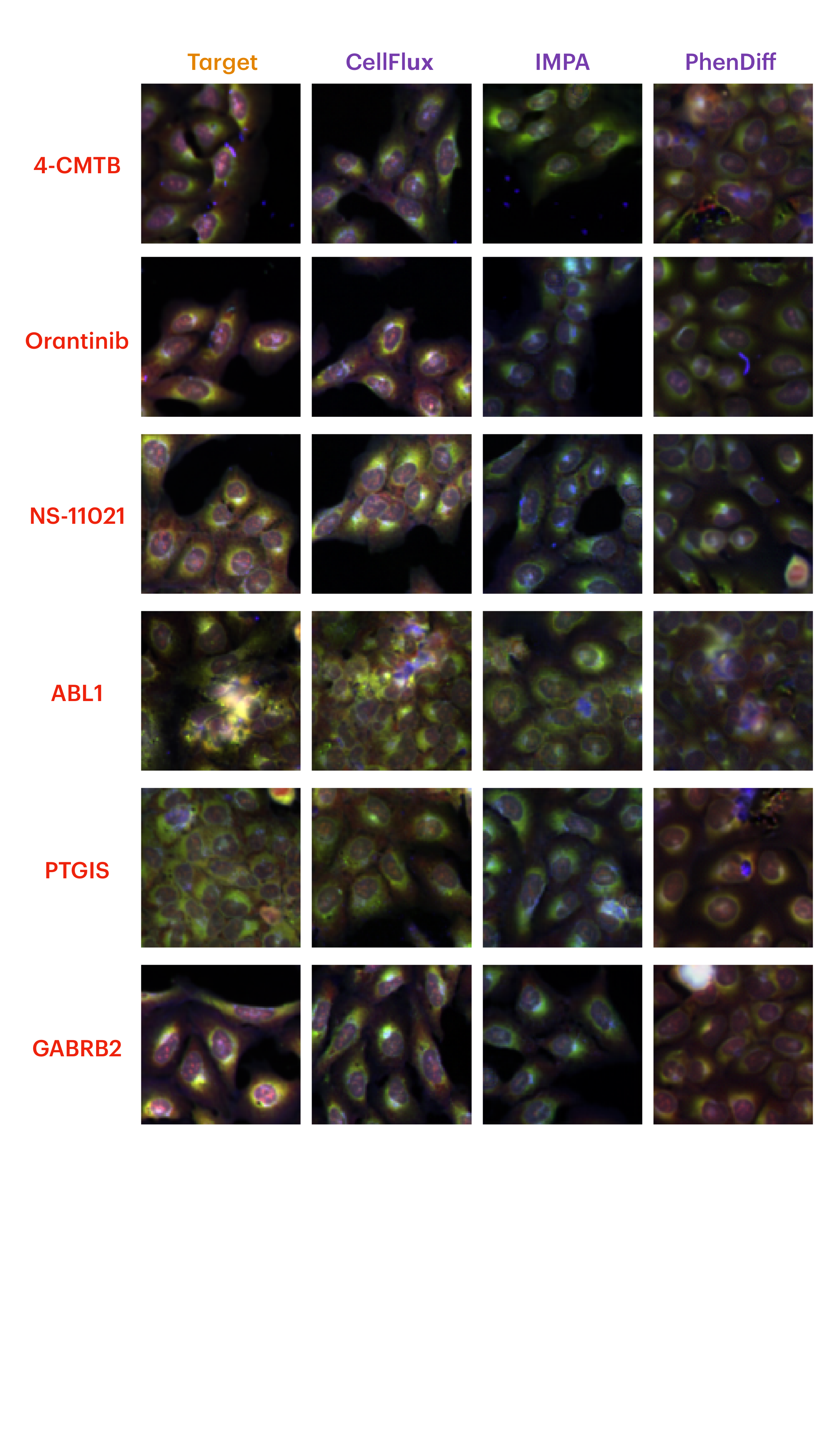}
    \caption{\textbf{More qualitative comparisons of generated samples on JUMP.}  }
    \label{fig:qualitative_jump}
\end{figure*}

\newpage
\section{Trajectory}
\label{sec:trajectory}

\textbf{Forward interpolation.} Our generation process aims to transform a control image into its corresponding perturbed image using our flow matching model. This is achieved by iteratively solving an ODE, where the velocity field predicted by the model guides the transformation at each timestep. As iterations progress, the image gradually evolves towards its final state at $t = 1$, representing the fully perturbed cell morphology.

\textbf{Backward interpolation.} Due to the bidirectional nature of our model, we can also perform a reversible generation process by inverting the velocity direction. This allows us to start from the perturbed image and gradually recover the original control image, demonstrating the reversible capabilities of our method.

\textbf{Trajectory examples.} Figures \ref{fig:bbbc_trajectory1} and \ref{fig:bbbc_trajectory2} illustrate these bidirectional transformations. The top section of each figure depicts the forward trajectory, where the control image is progressively updated based on the learned velocity field, ultimately generating the perturbed image at $t = 1$. The bottom section shows the reverse trajectory, where the process is reversed, progressively reconstructing the original control image. This capability, which is absent in diffusion-based methods, offers a promising approach for simulating morphological trajectories during perturbation responses. Moreover, \emph{CellFlux}’s reversible distribution transformation enables modeling of backward transitions in cell states, with potential applications in studying recovery dynamics and predicting treatment outcomes.

To further demonstrate our approach, we present trajectory examples for two drugs. The first, PP-2, reduces cell adhesion and disrupts actin reorganization, leading to a more dispersed cell distribution. In Figure \ref{fig:bbbc_trajectory1}, the forward trajectory shows cells transitioning from a clustered to a more diffuse state, while the reverse trajectory restores the original aggregation. The second, Chlorambucil, induces pyknosis (nuclear shrinkage). In Figure \ref{fig:bbbc_trajectory2}, the forward process shows one of the three nuclei undergoing cell death or division, leaving only two nuclei in the final state, while the reverse trajectory reconstructs the original three-nucleus configuration. These results highlight our method’s ability to capture biologically meaningful morphological transitions in both directions.

\begin{figure*}[htbp]
    \centering
    \includegraphics[width=\linewidth]{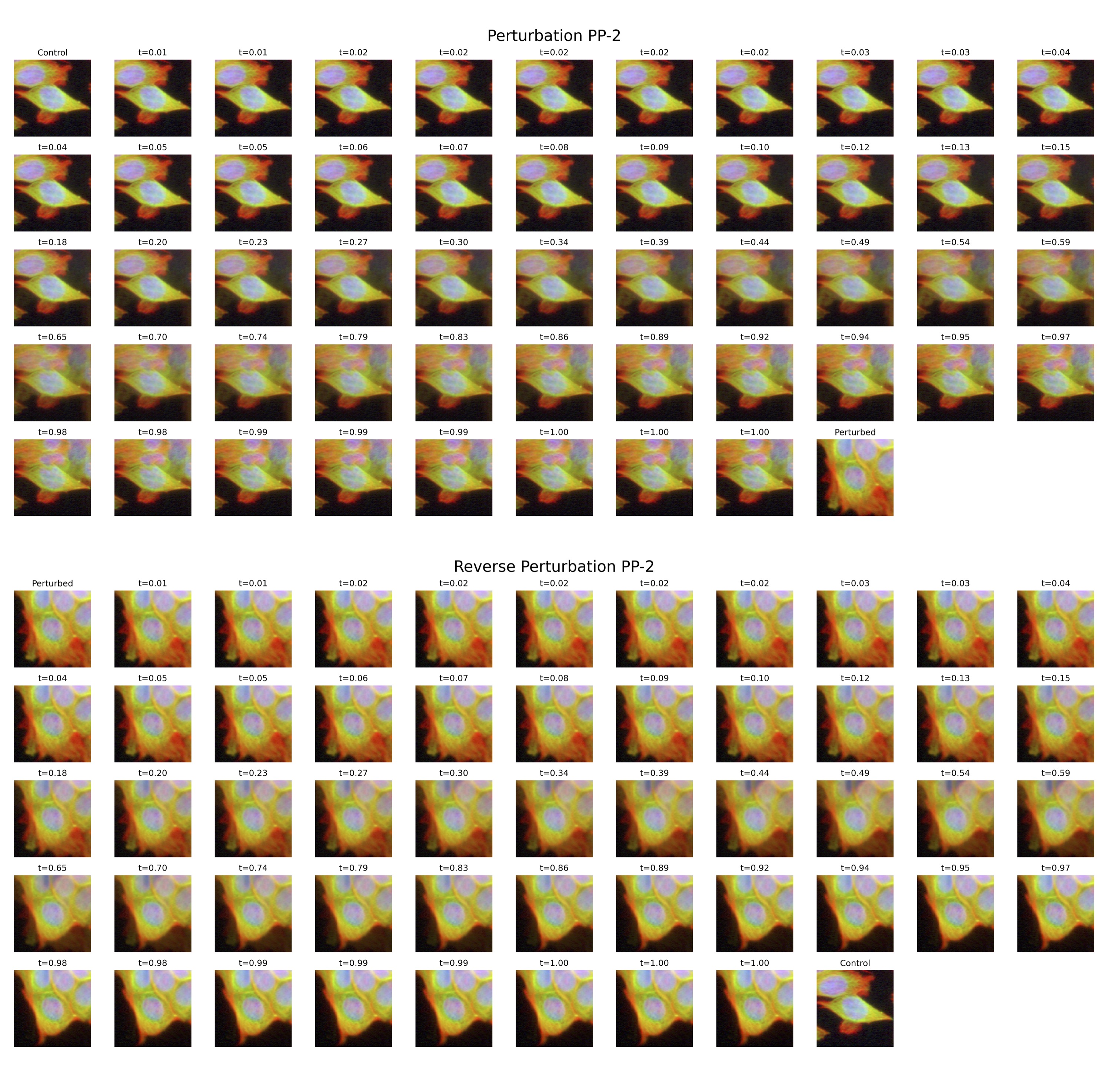}
    \caption{\textbf{(1/2) Bidirectional interpolation trajectory in BBBC021.}  }
    \label{fig:bbbc_trajectory1}
\end{figure*}
\begin{figure*}[htbp]
    \centering
    \includegraphics[width=\linewidth]{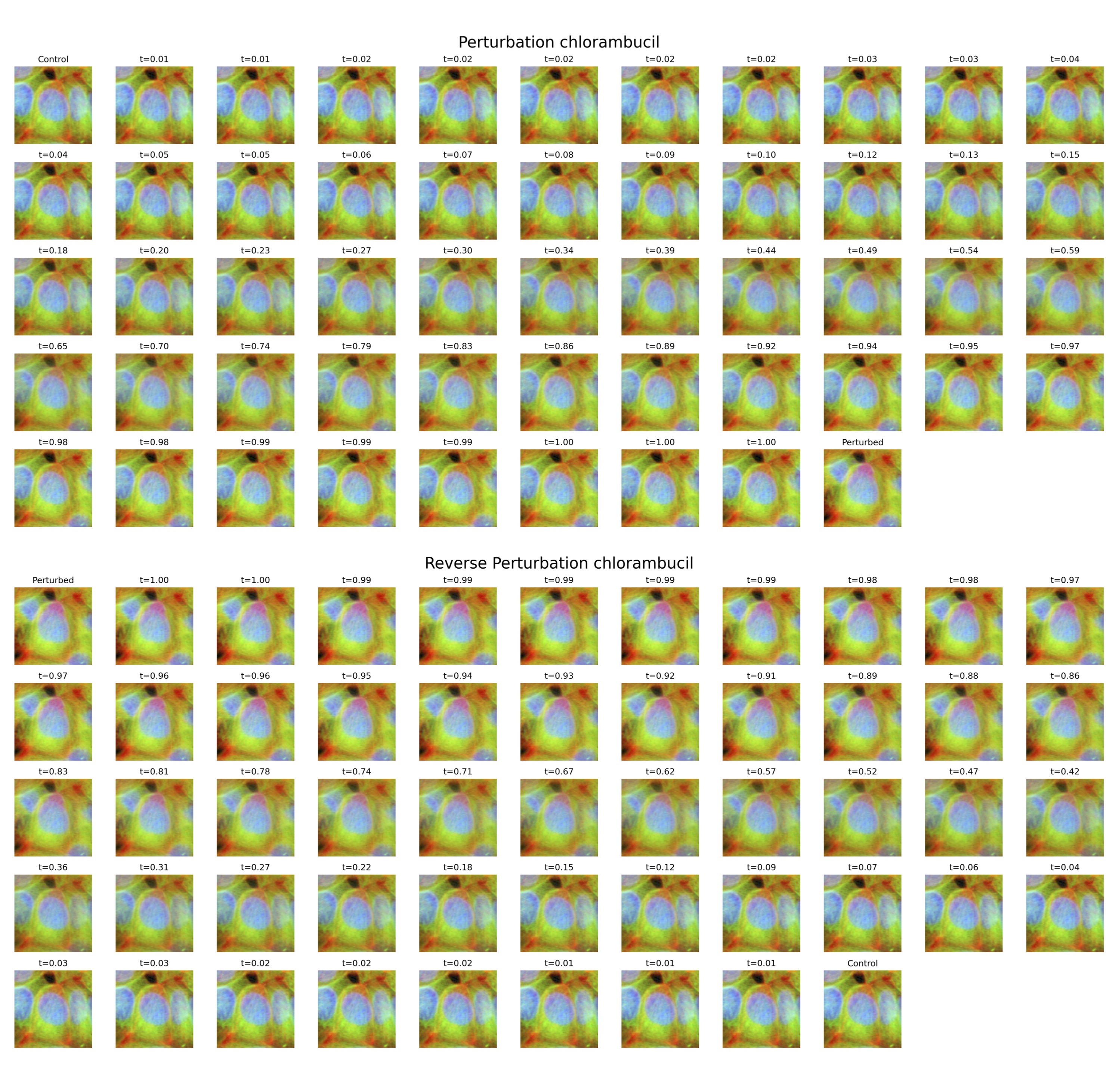}
    \caption{\textbf{(2/2) Bidirectional interpolation trajectory in BBBC021.}  }
    \label{fig:bbbc_trajectory2}
\end{figure*}

\newpage
\section{More Results}
\label{sec:results_appendix}

\textbf{Out-of-distribution generalization.} Table \ref{tab:ood_per_class} reports results on the out-of-distribution (OOD) set in BBBC021, evaluating performance on perturbations absent from the training set. This highlights our method’s strong generalization ability to novel chemical perturbations. The FID score measures the similarity between generated and real distributions, with lower values indicating a closer match. As shown in the table, our method effectively captures the biological effects of each perturbation, generating images that closely resemble real cellular responses. Robust OOD generalization is essential for biological research, enabling the exploration of untested interventions, analysis of unknown cellular responses, and the design of new drugs by simulating effects before experimental validation.

\begin{table}[htbp]
    \rowcolors{2}{white}{light-light-gray}
    \centering
    \setlength\tabcolsep{4pt}
    \small
    \begin{tabular}{lcccccccc}
        \toprule
        Method
        & AZ841 & Cyclohexamide & Cytochalasin D & Docetaxel & Epothilone B & Lactacystin & Latrunculin B & Simvastatin \\
        \midrule
        PhenDiff & 136.5 & 224.0 & 180.3 & 160.1 & 131.5 & 139.7 & 132.5 & 108.9 \\
        IMPA & 131.7 & 189.9 & 180.6 & 130.6 & 120.7 & 133.7 & 128.5 & \textbf{79.7} \\
        CellFlux & \textbf{84.1} & \textbf{99.5} & \textbf{129.4} & \textbf{81.9} & \textbf{93.7} & \textbf{106.9} & \textbf{97.9} & 90.9 \\
        \bottomrule
    \end{tabular}
    \caption{\textbf{Out-of-distribution generalization results per perturbation.}}
    \label{tab:ood_per_class}
\end{table}

\textbf{Effect of sample size on FID/KID.} FID and KID are known to be sensitive to the number of samples. We evaluate performance across varying sample sizes on BBBC021 (1K–5K, limited to 6K test images) and JUMP (10K–20K). As shown in Table~\ref{tab:fid_kid_sample_size}, \emph{CellFlux} consistently outperforms all baselines across all sample sizes, achieving 30--45\% relative improvement and demonstrating the robustness of its improvement regardless of sample size.

\begin{table}[htbp]
    \rowcolors{2}{white}{light-light-gray}
    \centering
    \setlength\tabcolsep{4pt}
    \small
\begin{tabular}{lcccccccccc}
\toprule
Method & 1K FID & 2.5K FID & 5K FID & 10K FID & 20K FID & 1K KID & 2.5K KID & 5K KID & 10K KID & 20K KID \\
\midrule
PhenDiff & 71.3 & 64.3 & 49.5 & 47.5 & 46.1 & 2.55 & 3.68 & 3.10 & 4.95 & 5.09 \\
IMPA     & 52.4 & 41.4 & 33.7 & 14.0 & 12.9 & 3.20 & 3.38 & 2.60 & 1.04 & 1.05 \\
CellFlux & \textbf{34.7} & \textbf{25.2} & \textbf{18.7} & \textbf{8.5} & \textbf{7.5} & \textbf{1.67} & \textbf{1.90} & \textbf{1.62} & \textbf{0.63} & \textbf{0.63} \\
\bottomrule
\end{tabular}
\caption{\textbf{FID and KID across different sample sizes.} }
\label{tab:fid_kid_sample_size}
\end{table}

\textbf{Comparison with more baselines.} Cell morphology prediction is a new task with only six baselines (Table~\ref{tab:related}). We included the only two published methods using control images~\cite{bourou2024phendiff,palma2023predicting}; others are unpublished~\cite{hung2024lumic,cook2024diffusion}, lack code~\cite{yang2021mol2image,cook2024diffusion}, or omit controls~\cite{yang2021mol2image,navidi2024morphodiff,cook2024diffusion}. We further compared MorphoDiff~\cite{navidi2024morphodiff}, a recent diffusion-based method, without using control images. Under our setup on BBBC021, \emph{CellFlux} outperforms it in image quality and MoA metrics.

\begin{table}[htbp]
    \rowcolors{2}{white}{light-light-gray}
    \centering
    \setlength\tabcolsep{4pt}
    \small
\begin{tabular}{lccccccc}
\toprule
Method & FID$_o$ & FID$_c$ & KID$_o$ & KID$_c$ & MoA Accuracy & MoA Macro-F1 & MoA Weighted-F1 \\
\midrule
MorphoDiff & 65.8 & 114.1 & 7.99  & 7.97 & 38.3 & 24.5 & 34.2 \\
CellFlux   & \textbf{18.7} & \textbf{56.8} & \textbf{1.62} & \textbf{1.59} & \textbf{71.2} & \textbf{49.0} & \textbf{70.7} \\
\bottomrule
\end{tabular}
\label{tab:more_baselines}
\caption{\textbf{\emph{CellFlux} outperforms MorphoDiff on BBBC021 in both image quality and MoA classification metrics.}}
\end{table}

\newpage
\textbf{Cross-dataset transfer.} To assess whether the learned model can generalize to highly out-of-distribution settings, we conduct a cross-dataset transfer experiment by applying a \emph{CellFlux} model trained on BBBC021 to RxRx1 and JUMP images. Surprisingly, we observe that \emph{CellFlux} successfully transfers and applies perturbation effects despite substantial domain shifts (Figure~\ref{fig:cross-data}), highlighting its potential as a unified foundation model across diverse perturbation datasets. Note that there are no target images for RxRx1 and JUMP, as these datasets do not contain the corresponding perturbation conditions.

\begin{figure}[htbp]
    \centering
    \includegraphics[width=\linewidth]{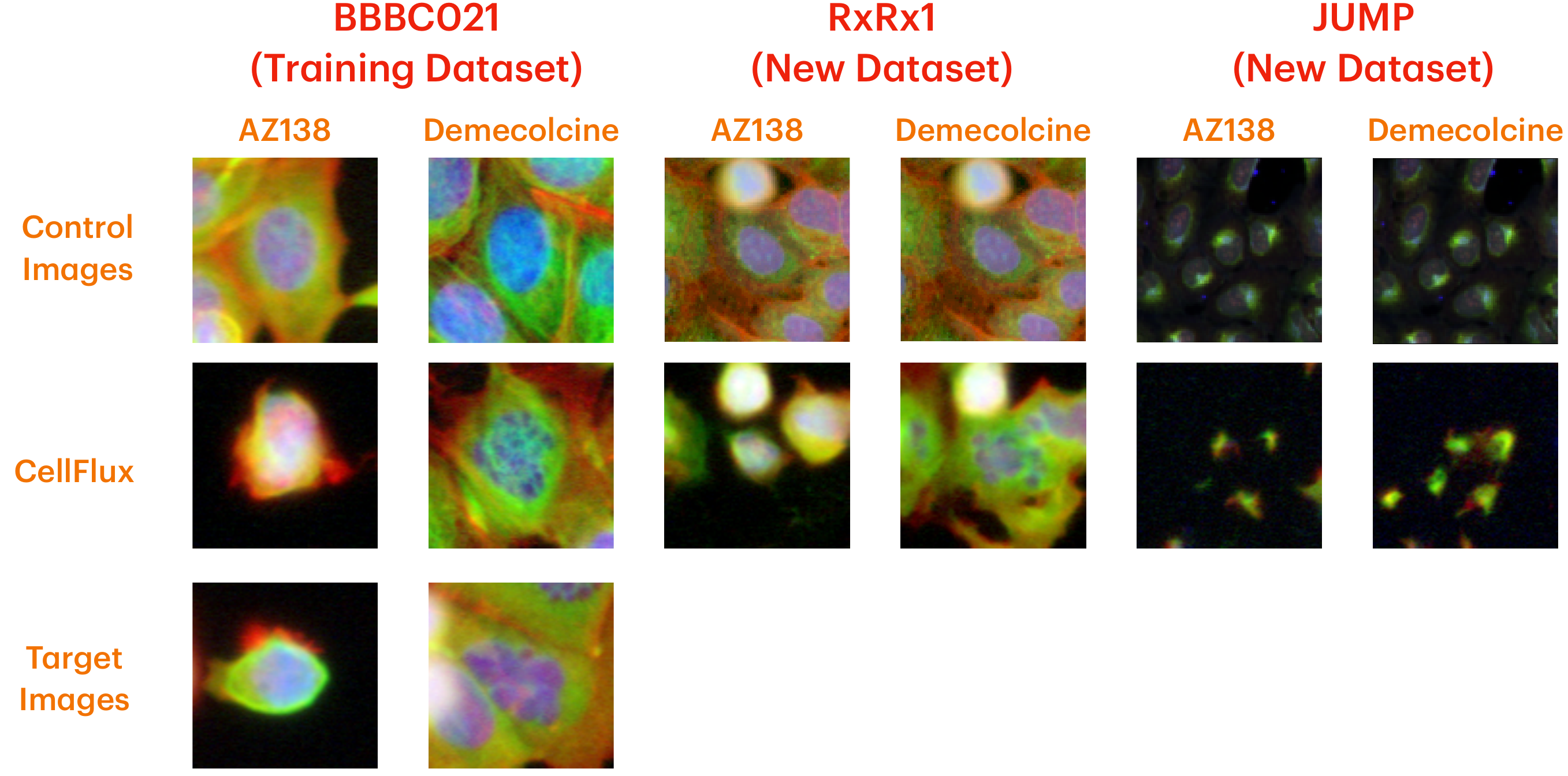}
    \caption{\textbf{Cross-dataset transfer of \emph{CellFlux}.} Although \emph{CellFlux} is trained solely on BBBC021, it demonstrates zero-shot generalization to two unseen datasets—RxRx1 and JUMP. Notably, it can predict morphological changes induced by perturbations (AZ138 and Demecolcine) that are absent from both datasets. AZ138, an Eg5 inhibitor, leads to cell shrinkage and death, while Demecolcine disrupts microtubules, resulting in smaller, fragmented nuclei.}
    \label{fig:cross-data}
\end{figure}

\textbf{CellProfiler metrics.} To further evaluate whether \emph{CellFlux} can capture perturbation-specific morphological changes, we extracted CellProfiler features related to nuclear size under three perturbations known to enlarge nuclei (taxol, vincristine, and demecolcine) using the BBBC021 dataset. As shown in Table~\ref{tab:compound_comparison} (mean and 95\% confidence interval reported), \emph{CellFlux} most closely matches the real perturbed morphology in terms of nuclear size.

\begin{table}[htbp]
    \rowcolors{2}{white}{light-light-gray}
    \centering
    \setlength\tabcolsep{5pt}
    \small
\begin{tabular}{lccc}
\toprule
Method & taxol & vincristine & demecolcine \\
\midrule
Control   & 1612.0 ± 39.5   & 1612.0 ± 39.5   & 1612.0 ± 39.5   \\
Target    & 2296.7 ± 190.1  & 2365.5 ± 125.5  & 2311.0 ± 136.1  \\
\midrule
PhenDiff  & 1755.9 ± 138.2  & 1947.8 ± 70.8   & 2118.8 ± 102.7  \\
IMPA      & 2088.3 ± 190.8  & 2116.9 ± 107.1  & 2386.5 ± 123.9  \\
CellFlux  & \textbf{2141.0 ± 166.6}  & \textbf{2276.4 ± 115.6}  & \textbf{2323.8 ± 121.9}  \\
\bottomrule
\end{tabular}
\caption{\textbf{Comparison of CellProfiler nuclear size features under three compounds known to enlarge nuclei.}}
\label{tab:compound_comparison}
\end{table}

\newpage
\section{Related Works}
\label{sec:related_appendix}

Table~\ref{tab:related} presents a brief comparison of our work with existing methods for cellular morphology generation.

\begin{table*}[htbp]
    \small
    \centering
    \rowcolors{2}{white}{light-light-gray}
    \setlength\tabcolsep{6pt}
    \renewcommand{\arraystretch}{1.1}
    \begin{tabular}{lccc}
    \toprule
    Paper & Generative Model & Use Control Image & How to Use Control Image \\
    \midrule
    Mol2Image~\cite{yang2021mol2image} & Normalizing Flows & No & - \\
    PhenDiff~\cite{bourou2024phendiff} & Diffusion & Yes & Map control to noise then to target \\
    LUMIC~\cite{hung2024lumic} & Diffusion & Yes & Add DINO embedding as condition \\
    pDIFF~\cite{cook2024diffusion} & Diffusion & No & - \\
    MorphoDiff~\cite{navidi2024morphodiff} & Diffusion & No & - \\
    IMPA~\cite{palma2023predicting} & GAN & Yes & Add AdaIn layers to GAN \\
    CellFlux (Ours) & Flow Matching & Yes & Initialization as source distribution \\
    \bottomrule
    \end{tabular}
    \caption{\textbf{Related works on cell morphology generation.}}
    \label{tab:related}
\end{table*}

\newpage
\section{Biological Validation of Interpolation Trajectories} 
\label{sec:biological_validation}

While \emph{CellFlux} enables interpolation of perturbation trajectories, the biological relevance of the interpolated states remains unverified. Validating these trajectories is a key next step. Although existing datasets lack ground-truth labels for intermediate states, future work could explore the following directions:

\begin{itemize}
    \item \textbf{Dose interpolation:} Some datasets include images under multiple dosage levels. We can test whether interpolation from control to high dose yields intermediate states that resemble realistic medium-dose morphologies.
    
    \item \textbf{Timepoint interpolation:} For datasets with multiple timepoints (e.g., day 0, day 5, day 10), we can evaluate whether interpolation from control to day 10 recovers morphologies consistent with intermediate timepoints such as day 5.
    
    \item \textbf{Drug perturbation interpolation:} Current datasets rarely include fine-grained trajectories post-treatment. Validating such interpolations may require collecting new wet-lab data, such as live-cell imaging over time.
\end{itemize}

In addition to validation, future work could improve the biological plausibility of interpolation trajectories and avoid degenerate “shortest path in pixel space” artifacts by:

\begin{itemize}
    \item \textbf{Interpolating in latent space:} Performing interpolation in a learned latent space (e.g., via an autoencoder) rather than pixel space may encourage trajectories to follow a more structured biological manifold.
    
    \item \textbf{Adding supervision from intermediate states:} When datasets contain known intermediate states (e.g., medium dose), models can be trained to explicitly pass through these points during interpolation.
    
    \item \textbf{Adding constraints:} Incorporating additional constraints—such as a GAN loss—can guide interpolated images to resemble real cell images, improving biological realism.
\end{itemize}

\end{document}